\documentclass[11pt]{article}
\usepackage{amsmath,amssymb,amsfonts}

\usepackage{latexsym,nicefrac,bbm}
\usepackage{xspace}
\usepackage{color,fancybox,graphicx,fullpage,soul}
\usepackage{hyperref}
\usepackage{pdfsync}
\usepackage{hhline}

\newcommand{\defeq}{\stackrel{\textup{def}}{=}}
\newcommand{\nfrac}{\nicefrac}

\newcommand{\tO}{\widetilde{O}}

\usepackage{enumitem}
\expandafter\def\expandafter\normalsize\expandafter{
\setlength{\abovedisplayskip}{3pt}
\setlength{\belowdisplayskip}{3.5pt}
\setlength{\abovedisplayshortskip}{3pt}
\setlength{\belowdisplayshortskip}{3.5pt}}

\newcommand{\cM}{\mathcal{M}}
\newcommand{\cC}{\mathcal{C}}
\newcommand{\cB}{\mathcal{B}}

\newcommand{\Pb}{\mathbb{P}}

\newcommand{\ccount}{\textsc{Count}}
\newcommand{\csample}{\textsc{Sample}}
\newcommand{\bcount}{\textsc{BCount}}
\newcommand{\bsample}{\textsc{BSample}}
\newcommand{\ecount}{\textsc{ECount}}

\newcommand{\eps}{\varepsilon}

\newenvironment{proof}{\noindent{\bf Proof:}\hspace*{1em}}{\qed\bigskip}
\newcommand{\qed}{\hfill\ensuremath{\square}}

\clubpenalty=10000
\widowpenalty = 10000

\def\showauthornotes{0}

\def\showdraftbox{0}



\newtheorem{theorem}{Theorem}[section]

\newtheorem{definition}{Definition}[section]
\newtheorem{lemma}[theorem]{Lemma}
\newtheorem{remark}[theorem]{Remark}

\newtheorem{corollary}{Corollary}[section]

\newtheorem{fact}[theorem]{Fact}



\def\FullBox{\hbox{\vrule width 6pt height 6pt depth 0pt}}

\def\qed{\ifmmode\qquad\FullBox\else{\unskip\nobreak\hfil
\penalty50\hskip1em\null\nobreak\hfil\FullBox
\parfillskip=0pt\finalhyphendemerits=0\endgraf}\fi}

\def\qedsketch{\ifmmode\Box\else{\unskip\nobreak\hfil
\penalty50\hskip1em\null\nobreak\hfil$\Box$
\parfillskip=0pt\finalhyphendemerits=0\endgraf}\fi}

\newenvironment{proofof}[1]{\begin{trivlist} \item {\bf Proof
#1:~~}}
  {\qed\end{trivlist}}


\newcommand\Z{\mathbb Z}
\newcommand\N{\mathbb N}
\newcommand\R{\mathbb R}



\newcommand{\marginlabel}[1]%
{\mbox{}\marginpar{\it{\raggedleft\hspace{0pt}#1}}}
\newcommand{\poly}{\mathrm{poly}}


\definecolor{Mygray}{gray}{0.8}

 \ifcsname ifcommentflag\endcsname\else
  \expandafter\let\csname ifcommentflag\expandafter\endcsname
                  \csname iffalse\endcsname
\fi

\ifnum\showauthornotes=1
\newcommand{\Authornote}[2]{{\sf\small\color{red}{[#1: #2]}}}
\newcommand{\Authoredit}[2]{{\sf\small\color{red}{[#1]}\color{blue}{#2}}}
\newcommand{\Authorfnote}[2]{\footnote{\color{red}{#1: #2}}}
\newcommand{\Authorfixme}[1]{\Authornote{#1}{\textbf{??}}}
\newcommand{\Authormarginmark}[1]{\marginpar{\textcolor{red}{\fbox{
#1:!}}}}
\else
\newcommand{\Authornote}[2]{}
\newcommand{\Authoredit}[2]{}
\newcommand{\Authorcomment}[2]{}
\newcommand{\Authorfnote}[2]{}
\newcommand{\Authorfixme}[1]{}
\newcommand{\Authormarginmark}[1]{}
\fi




\newcommand{\inparen}[1]{\left(#1\right)}             
\newcommand{\inbraces}[1]{\left\{#1\right\}}           






\newcommand{\norm}[1]{\ensuremath{\left\lVert #1 \right\rVert}}


\newcommand{\diag}[1]{{\sf Diag}\left({#1}\right)}





 {
	\begin{enumerate}}{\end{enumerate}}

\newcounter{lecnum}


\newlength{\tpush}
\setlength{\tpush}{2\headheight}
\addtolength{\tpush}{\headsep}





%



\ifnum\showdraftbox=1

\else

\fi

\title{On the Complexity of Constrained Determinantal Point Processes\footnote{This paper subsumes the following prior unpublished works: arxiv 1607.01551 and 1608.00554.}}

\usepackage{authblk}

\author[1]{L. Elisa Celis\thanks{elisa.celis@epfl.ch}}
\author[2]{Amit Deshpande\thanks{amitdesh@microsoft.com}}
\author[2]{Tarun Kathuria\thanks{t-takat@microsoft.com}}
\author[1]{Damian Straszak\thanks{damian.straszak@epfl.ch}}
\author[1]{Nisheeth K. Vishnoi\thanks{nisheeth.vishnoi@epfl.ch}}
\affil[1]{\'{E}cole Polytechnique F\'{e}d\'{e}rale de Lausanne (EPFL), Switzerland}
\affil[2]{Microsoft Research, India}
\begin{document}
\maketitle
\begin{abstract}

Determinantal Point Processes (DPPs) are probabilistic models that arise in quantum physics and random matrix theory and have recently found numerous applications in theoretical computer science and machine learning.
DPPs define probability distributions over subsets of a given ground set, they exhibit interesting properties such as negative correlation, and, unlike other models of negative correlation such as Markov random fields, have efficient algorithms for sampling.
When applied to kernel methods in machine learning, DPPs favor subsets of the given data with more diverse features.
However, many real-world applications require efficient algorithms to sample from DPPs with additional constraints on the sampled subset, e.g., partition or matroid constraints that are important from the viewpoint of ensuring priors, resource or fairness constraints on the sampled subset.
Whether one can efficiently sample from DPPs in such constrained settings is an important problem that was first raised in an influential survey of DPPs for machine learning by \cite{KuleszaTaskar12} and studied in some recent works in the machine learning literature.
The main contribution of our paper is the first correct resolution of the complexity of sampling from DPPs with constraints.
On the one hand, we give exact efficient algorithms for sampling from constrained DPPs when the description of the constraints is  in unary; this includes special cases of practical importance such as a small number of partition, knapsack or budget constraints.
On the other hand, we prove that when the constraints are specified in binary, this problem is $\#${\bf P}-hard via a reduction from the problem of computing mixed discriminants implying that it may be unlikely that there is an FPRAS.
Technically, our algorithmic result benefits from viewing the constrained sampling  problem via the lens of  polynomials and we obtain our complexity results by providing an equivalence between computing mixed discriminants and sampling from partition constrained DPPs. 
As a consequence,  we obtain a few  corollaries of independent interest: 1) An  algorithm to count, sample (and, hence, optimize) over the base polytope of regular matroids when there are additional (succinct) budget constraints and, 
2) An algorithm to  evaluate and compute the { mixed characteristic polynomials}, that played a central role in the resolution of the Kadison-Singer problem, for certain special cases.
\end{abstract}

\section{Introduction}
Algorithms for  sampling from a discrete set of objects are sought after in various disciplines of  computer science, optimization, mathematics and physics due to their far reaching applications.
For instance, sampling from the Gibbs distribution was one  of the original optimization methods (see, e.g., \cite{Hajek88}) and sampling from dependent distributions is often used in the design of  approximation algorithms (see, e.g., \cite{BertsimasV04,ChekuriVZ10,Harvey2014}).
In machine learning, algorithms for sampling from discrete probability distributions are sought after in various summarization, inference and learning tasks \cite{JordanWainwright,MezardMontanari,KuleszaTaskar12}.
Here, a particular class of probability distributions that has received much attention are the Determinantal Point Processes (DPP).
In the discrete setting, a DPP is a distribution over subsets of a finite data set $[m] \defeq \{1,2,\ldots, m\}$.
Here, a data point  $i$ is associated to a  feature vector $v_i \in \mathbb{R}^d$, and an $m \times m$ positive semidefinite (PSD) kernel $L$ gives the dot product of the feature vectors of any two data points as a measure of their pairwise similarity.
Determinants, then, provide a natural measure of the diversity of a subset of data points, often backed by a physical intuition based on volume or entropy.
A  DPP is thus defined with respect to the kernel $L$ such that for all $S\subseteq [m]$ we have $\Pb(S)\propto \det(L_{S,S})$, where $L_{S,S}$ is the principal minor of $L$ corresponding to rows and columns from $S$.\footnote{We treat DPPs via {\it $L$-ensembles}, while commonly they are defined using   {\it kernel matrices}, for practical purposes these two definitions are equivalent.}
The quantity $\det(L_{S,S})$ can be interpreted as the squared volume of the $|S|$-dimensional parallelepiped spanned by the vectors $\{v_i: i\in S\}$ and, intuitively, the larger the volume, the more {\em diverse} the set of vectors.
Hence such distributions tend to prefer most diverse or {\em informative}  subsets of data points.
Mathematically, the fact that the probabilities are derived from determinants allows one to deduce elegant and non-trivial properties of such distributions,  such as negative correlation and concentration of measure. %
Efficient polynomial time algorithms for sampling from DPPs (see \cite{HKPV05,DR10}) is what sets them apart from the other probabilistic models of negative correlation such as Markov random fields.
As a consequence, sampling from DPPs has been  successfully applied to a number of problems, such as document summarization, sensor placement and recommendation systems \cite{Lin11,Krause08,Zhou09,Zhai03,Yue08}.

Given the wide applicability of DPPs, a natural question is whether they can be generalized to incorporate priors, budget or fairness constraints, or other natural combinatorial constraints. %
In other words, given an $m \times m$ kernel $L$ and a family  $\cC \subseteq 2^{[m]}$ that represents  constraints on the subsets, can we efficiently sample from the DPP distribution supported only on $\cC$; that is, $\Pb(S) \propto \det(L_{S,S})$ for $S \in \cC$, and $\Pb(S) = 0$ otherwise. Here are two important special cases.
\begin{itemize}
\item \emph{Fairness (or Partition) constraints:} Consider the setting where $[m]$ is a collection of data points and each point is associated with a sensitive attribute such as gender. Then $\cC$ is the family of attribute-unbiased subsets of $[m]$ -- e.g., those subsets that  contain an equal number of male and female points.
Thus, the corresponding $\cC$-constrained DPP outputs a diverse set of points while maintaining fairness with respect to the sensitive attribute; see \cite{CDKV16} for this and other applications of constrained DPPs to eliminating algorithmic bias.
\item \emph{Budget constraints:} In data subset selection or active learning, when there is a cost $c_i \in \mathbb{Z}$ associated with each data point, it is natural to ask for a diverse training sample $S$ from a corresponding DPP such that its cost $\sum_{i \in S}c_i$ is bounded from above by $C \in \mathbb{Z}$. See also \cite{WeiIyerBilmes15} for a related optimization variant.
\end{itemize}
In their survey, \cite{KuleszaTaskar12} posed the open question of efficiently sampling from DPPs with additional combinatorial constraints on the support of the distribution.
Sampling from constrained DPPs is algorithmically non-trivial, as many natural heuristics fail. The probability mass on the constrained family of subsets can be arbitrarily small, hence, ruling out a rejection sampling approach.
For partition constraints, a natural heuristic is to sample from independent smaller-sized DPPs, each defined over a different part. 
However, such a product distribution would select two (potentially very similar) items from two different parts independently, whereas in a constrained DPP distribution they must be negatively correlated. 
Unlike DPPs and the special case of cardinality-constrained $k$-DPPs (in which $\cC$ is the family of \emph{all} subsets of size $k$ -- see Section~\ref{sec:relatedwork}), it is not clear that there is a clean expression for the partition function or the marginals of a constrained DPP. 
Another approach to approximately sample from constrained DPPs is via Markov Chain Monte Carlo (MCMC) methods as in the recent work of \cite{LJS16}. 
This approach can be shown to be efficient when the underlying Markov chain is connected and the DPP kernel is close to a diagonal matrix (or nearly-log-linear; see Theorem 4 of \cite{LJS16}). 
However, the above conditions do not hold for sampling partition-constrained subsets -- even with constant number of parts -- from most DPP kernels.
Thus, while the problem of sampling from constrained DPPs has attracted attention, its complexity has remained open.

The main contribution of our paper is the first correct resolution of the problem of sampling from constrained DPPs.
Our  results give a dichotomy for the complexity of this problem: On the one hand, we give exact  algorithms which are polynomial time when the description of $\cC$ (in terms of the costs and budgets) is  in {\em unary}; this includes special cases of practical importance such as the fairness,  partition or budget constraints mentioned above. 
On the other hand, we prove that in general this problem is $\#${\bf P}-hard  when the constraints of $\cC$ are specified in binary.
Our algorithmic results go beyond the MCMC methods and include special cases of practical importance such as (constantly-many) partition or fairness constraints (studied, e.g., by \cite{CDKV16}) and a more general class of budget constraints and linear families defined in the following section.

Our algorithmic results benefit from viewing the probabilities arising in constrained DPPs as coefficients of certain multivariate polynomials.
This viewpoint also allows us to extend our result on constrained DPPs to derive important consequences of independent interest.
For instance, using the intimate connection between linear matroids and DPPs, we arrive at efficient algorithms to sample a basis of regular matroids when there are additional budget constraints -- significantly extending results of \cite{Eppstein95,BM97} for spanning trees.
To prove the hardness result, we present an \textit{equivalence} between the problem of \emph{computing} the {mixed discriminant} of a tuple of PSD matrices and that of \emph{sampling} from partition-constrained DPPs. 
Mixed discriminants (see Section \ref{sec:mixed_discr} for a definition) generalize the permanent,  arise in the  proof of the Kadison-Singer problem (\cite{MSS13}, see  \cite{Harvey13} for a survey on this topic) and are closely related to mixed volumes (see, e.g., \cite{Barvinok97}). 
However, unlike the result for permanent \cite{JSV04} and volume computation \cite{DyerFK91}, there is evidence  that the mixed discriminant problem may be much harder and may not admit an FPRAS; see \cite{Gurvits05}. 
Thus, in light of our equivalence between mixed discriminants and partition DPPs, it may be unlikely that we can even approximately sample from partition DPPs (with an arbitrary number of parts) efficiently.
Further, this connection  implies that important special cases of the mixed discriminant problem, for instance computing the higher order coefficients of the mixed-characteristic polynomial or evaluating the mixed characteristic polynomial of low rank matrices at a given point, can be solved efficiently, which may be of independent interest.

\subsection{Our Framework and Results}

The starting point of our work is the observation that if we let $\mu$ be the measure on subsets of $[m]$  corresponding to the kernel matrix $L$ (i.e., $\mu(S) \defeq \det(L_{S,S}))$, then given $L$, there is an efficient algorithm to evaluate the polynomial
$$g_\mu(x) \defeq \sum_{S\subseteq [m]} \mu(S) x^S$$
where $x^S$ denotes $\prod_{i\in S} x_i$ for any setting of its variables.
Indeed, consider the Cholesky decomposition of the kernel $L=VV^\top$.  Then, the polynomial $x \mapsto \det( V^\top X V +I)$ (where $X$ denotes the diagonal matrix with $x$ on the diagonal) is equal to $g_\mu(x)$ (see Fact~\ref{fact:gen_det}) and hence can be efficiently evaluated using Gaussian elimination for any input $x$.
We say that such a $\mu$ has an {\em efficient evaluation oracle} and, as it turns out,   this is the {\em only} property we need from DPPs and our results generalize to any measure $\mu$ for which we have such an evaluation oracle.
Before we explain our results, we formally introduce the sampling problem in this general framework.

\begin{definition}[Sampling]
Let $\mu : 2^{[m]} \to \R_{\geq 0}$ be a function assigning non-negative real values to subsets of $[m]$ and let $\cC \subseteq 2^{[m]}$ be any family of subsets of $[m]$.
We denote the (sampling) problem of selecting a set $S\in \cC$ with probability $p_S = \frac{\mu(S)}{\sum_{T\in \cC} \mu(T)}$ by $\csample[\mu, \cC].$
\end{definition}
\noindent
Building up on the  equivalence between sampling and counting \cite{JVV86}, we show that if one is given oracle access to the generating polynomial $g_\mu$ and
if $\mu$ is a nonnegative measure, the problem  $\csample[\mu, \cC]$ is  essentially equivalent to the following counting problem; see Theorem \ref{thm:equiv_count_sample} in Section \ref{sec:equiv}.
\begin{definition}[Counting]
Let $\mu : 2^{[m]} \to \R_{\geq 0}$ be a function assigning non-negative real values to subsets of $[m]$ and let $\cC \subseteq 2^{[m]}$ be any family of subsets of $[m]$.
We denote the (counting)  problem of computing the sum $\sum_{S \in \cC} \mu(S)$   by $\ccount[\mu, \cC]$.
\end{definition}

\noindent
In particular, a polynomial time algorithm for $\ccount[\mu, \cC]$ can be  translated into a polynomial time algorithm for $\csample[\mu, \cC]$.
Interestingly, this relation holds no matter what $\cC$ is; in particular, no specific assumptions on how the access to $\cC$ is provided are required.

Towards developing counting algorithms in our framework,  we focus on a class of families $\cC \subseteq 2^{[m]}$, which we call {\it Budget Constrained Families}, where a cost vector $c\in \Z^m$ and a budget value $C\in \Z$ are given, and the family consists of all sets $S\subseteq [m]$ of total cost $c(S) \defeq \sum_{i\in S} c_i$ at most $C$.
We call the counting and sampling problems for this special case $\bcount[\mu, c, C]$ and $\bsample[\mu, c, C]$ respectively.

Our key result is that  the  $\bcount$ problem (and hence also $\bsample$) is efficiently solvable whenever the costs are not too large in magnitude.

\begin{theorem}[Counting under Budget Constraints]\label{thm:main}
There is an algorithm, which given a function $\mu: 2^{[m]} \to \R$ (via oracle access to $g_\mu$), a cost vector $c\in \Z^m$ and a cost value $C\in \Z$ solves the $\bcount[\mu, c, C]$ problem in polynomial time with respect to $m$ and $\norm{c}_1$.
\end{theorem}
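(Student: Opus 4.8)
The plan is to reduce the budget-constrained counting problem to a single evaluation-style computation by introducing a fresh variable that tracks the total cost. Concretely, I would consider the univariate (in a new formal variable $y$) polynomial
\[
h(y) \defeq g_\mu(y^{c_1}, y^{c_2}, \ldots, y^{c_m}) = \sum_{S \subseteq [m]} \mu(S)\, y^{c(S)},
\]
obtained by substituting $x_i = y^{c_i}$ into the generating polynomial $g_\mu$. The coefficient of $y^k$ in $h$ is exactly $\sum_{S : c(S) = k} \mu(S)$, so the answer to $\bcount[\mu, c, C]$ is $\sum_{k \le C}$ of these coefficients. The key structural point is that $c(S)$ ranges over a set of integers of size at most $2\norm{c}_1 + 1$ (between $-\sum_{i : c_i < 0} |c_i|$ and $\sum_{i : c_i > 0} c_i$), so $h$, after clearing a denominator $y^{N}$ with $N = \sum_{i: c_i<0}|c_i|$ to make it a genuine polynomial, has degree at most $2\norm{c}_1$. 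Hence $h$ is determined by its values at $2\norm{c}_1 + 1$ points, each of which is a single evaluation of $g_\mu$ at the point $(y^{c_1}, \ldots, y^{c_m})$ — available through the oracle.

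The steps, in order: (1) Set $N \defeq \sum_{i : c_i < 0} |c_i|$ and define $\tilde h(y) \defeq y^N h(y)$, a polynomial of degree $D \le 2\norm{c}_1$ with nonnegative integer exponents; its coefficient of $y^{N+k}$ equals $a_k \defeq \sum_{S : c(S)=k}\mu(S)$. (2) Pick $D+1$ distinct nonzero evaluation points $y_0, \ldots, y_D$ (e.g.\ $1, 2, \ldots, D+1$, or roots of unity if one prefers an FFT-based inversion), and for each $y_j$ query the oracle for $g_\mu$ at $(y_j^{c_1}, \ldots, y_j^{c_m})$, then multiply by $y_j^N$ to get $\tilde h(y_j)$. (3) Recover the coefficients $(a_k)_{-N \le k \le D-N}$ by solving the resulting Vandermonde system (polynomial interpolation), which costs $\poly(D) = \poly(\norm{c}_1)$ arithmetic operations. (4) Output $\sum_{k \le C} a_k$. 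Each of these steps runs in time polynomial in $m$ and $\norm{c}_1$, and uses only $O(\norm{c}_1)$ oracle calls, giving the claimed bound. The equivalence with $\bsample$ then follows from Theorem~\ref{thm:equiv_count_sample} applied to the family $\cC = \{S : c(S) \le C\}$ and its "sub-budget" variants.

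The main obstacle I anticipate is numerical/bit-complexity rather than conceptual: the entries $y_j^{c_i}$ can be astronomically large (they are integers with $\Theta(\norm{c}_1 \log \norm{c}_1)$ bits), and the Vandermonde inversion can amplify magnitudes, so one must argue that all intermediate quantities have bit-length polynomial in the input size and that the oracle for $g_\mu$ is being invoked on inputs of polynomial bit-length. This is where the "in unary" hypothesis does its work: when $\norm{c}_1$ is polynomial in the input length, $\norm{c}_1 \log\norm{c}_1$ is too, so all exponents, evaluation points, and interpolation arithmetic stay polynomially bounded. A secondary technical point is handling the case $\mu$ has non-integer (real) values — here exact interpolation is replaced by interpolation to sufficient precision, and one checks that the condition number of the Vandermonde system (controlled by the choice of well-separated nodes, or unit-modulus nodes) only loses $\poly(\norm{c}_1)$ bits of precision, which is absorbed into the polynomial running time. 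I would also remark that choosing the $y_j$ to be $(D+1)$-th roots of unity turns step (3) into an inverse DFT, making the coefficient extraction both fast and numerically stable, at the cost of working over $\com$.
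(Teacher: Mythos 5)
Your proposal follows essentially the same route as the paper's proof: substitute $x_i = z^{c_i}$ into $g_\mu$ to get a univariate polynomial whose coefficients are the cost-level sums, clear negative exponents by multiplying by a suitable power of $z$, evaluate at $O(\norm{c}_1)$ points via the oracle, and interpolate to recover the coefficients. Your additional remarks on bit-complexity and conditioning of the interpolation are a reasonable elaboration of details the paper leaves implicit, but the argument is the same.
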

The proof of Theorem~\ref{thm:main} (see Section~\ref{sec:counting}) benefits from an interplay between probability measures and polynomials.
It reduces the counting problem to computing the coefficients of a certain univariate polynomial which, in turn, can be evaluated efficiently given access to the generating polynomial for $\mu$.
We can then employ interpolation in order to recover the required coefficients.

It is not hard to see that Theorem~\ref{thm:main} also implies the same result for families with a single {\it equality } constraint ($c(S)=C$) or for any constraint of the form $c(S) \in K$, where $K\subseteq \Z$ is given as input together with $c\in \Z^m$ and $C\in \Z$. Furthermore, our framework can be easily extended to the case of multiple (constant number of) such constraints.

As mentioned earlier, what makes DPPs attractive is that their generating polynomial, arising from a determinant, is  efficiently computable.
Using this fact, Theorem \ref{thm:main} and the equivalence between sampling and counting, we can deduce the following result.
\begin{corollary}\label{cor:dpps}
There is an algorithm, which given a PSD matrix $L\in \R^{m\times m}$, a cost vector $c\in \Z^m$ and a cost value $C\in \Z$ samples a set $S$ of cost $c(S)\leq C$ with probability proportional to $\det(L_{S,S})$. The running time of the algorithm is polynomial with respect to $m$ and $\norm{c}_1$.
\end{corollary}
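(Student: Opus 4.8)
The plan is to obtain Corollary~\ref{cor:dpps} by composing Theorem~\ref{thm:main} with the counting-to-sampling reduction (Theorem~\ref{thm:equiv_count_sample}), using the fact that a DPP measure admits an efficient evaluation oracle. First I would set $\mu(S) \defeq \det(L_{S,S})$ for $S \subseteq [m]$. Since $L$ is PSD, every principal minor $\det(L_{S,S})$ is nonnegative, so $\mu : 2^{[m]} \to \R_{\geq 0}$ is a nonnegative measure, which is exactly the hypothesis needed for the equivalence between $\ccount$ and $\csample$. The budget family here is $\cC = \{ S \subseteq [m] : c(S) \leq C\}$, so that $\ccount[\mu,\cC] = \bcount[\mu,c,C]$ and $\csample[\mu,\cC] = \bsample[\mu,c,C]$.

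Next I would supply the evaluation oracle for $g_\mu$ required by Theorem~\ref{thm:main}. Compute a Cholesky (or any) factorization $L = VV^\top$ with $V \in \R^{m \times d}$; by Fact~\ref{fact:gen_det} we have $g_\mu(x) = \det(V^\top X V + I)$, where $X$ is the diagonal matrix with $x$ on the diagonal. For any rational input $x$ this determinant is computable by Gaussian elimination in time polynomial in $m$ and the bit-length of the input, so $\mu$ has an efficient evaluation oracle. Feeding this oracle, the cost vector $c \in \Z^m$, and the budget $C \in \Z$ into Theorem~\ref{thm:main} yields an algorithm that solves $\bcount[\mu,c,C]$ — i.e.\ computes $\sum_{S : c(S)\leq C} \det(L_{S,S})$ — in time polynomial in $m$ and $\norm{c}_1$.

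Finally, since $\mu$ is nonnegative and we now have a polynomial-time routine for $\ccount[\mu,\cC]$ together with oracle access to $g_\mu$, Theorem~\ref{thm:equiv_count_sample} converts this into a polynomial-time algorithm for $\csample[\mu,\cC] = \bsample[\mu,c,C]$, which is precisely the task of sampling a set $S$ with $c(S)\leq C$ and $\Pb(S) \propto \det(L_{S,S})$; the running time remains polynomial in $m$ and $\norm{c}_1$. The corollary is thus essentially a two-line composition, and the only real care needed is in the numerical bookkeeping: the evaluation oracle and the interpolation/self-reduction steps operate over the reals, so one must check that polynomially many bits of precision suffice to recover the counting quantities (and hence the sampling probabilities) to within the tolerance that Theorem~\ref{thm:equiv_count_sample} permits. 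For PSD $L$ with rational entries this is routine, but it is the one place where the argument is not purely formal.
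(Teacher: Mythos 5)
Your proposal matches the paper's proof exactly: instantiate $\mu(S)=\det(L_{S,S})$, obtain the evaluation oracle for $g_\mu$ via Fact~\ref{fact:gen_det}, apply Theorem~\ref{thm:main} for counting, and convert to sampling via Theorem~\ref{thm:equiv_count_sample} (with exact counting yielding exact sampling). Your additional remark about numerical precision is a reasonable extra caution but not something the paper dwells on.
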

From the above one can derive efficient sampling algorithms for several classes of constraint families $\cC$ which have {\it succinct descriptions}. Indeed, we establish counting and sampling algorithms for a general class of {\it linear families} of the form
\begin{equation}\label{eq:linear_fam}
\cC = \inbraces{S \subseteq [m]: c_1(S) \in K_1, c_2(S) \in K_2, \ldots, c_p(S) \in K_p}
\end{equation}
where $c_1, c_2, \ldots, c_p \in \Z^m$ and $K_1, \ldots, K_p \subseteq \Z$. We prove the following

\begin{corollary}\label{cor:dpps_gen}
There is an algorithm, which given a PSD matrix $L\in \R^{m\times m}$ and a description of a linear family $\cC$ as in~\eqref{eq:linear_fam}, samples a set $S\in \cC$ with probability proportional to $\det(L_{S,S})$. The running time of the algorithm is polynomial in $m$ and $\prod_{j=1}^p \inparen{\norm{c_j}_1+1}$.
\end{corollary}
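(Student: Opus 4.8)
The plan is to reduce the $p$-constraint problem to the single-constraint case already handled by Theorem~\ref{thm:main} (more precisely, its noted extension to constraints of the form $c(S) \in K$), and then to invoke the equivalence between counting and sampling. Since $L$ is PSD, the measure $\mu(S) \defeq \det(L_{S,S})$ is nonnegative, and writing $L = VV^\top$ the polynomial $g_\mu(x) = \det(V^\top X V + I)$ (Fact~\ref{fact:gen_det}) is evaluable in polynomial time by Gaussian elimination; thus $\mu$ has an efficient evaluation oracle, and it suffices to solve $\ccount[\mu, \cC]$ in time polynomial in $m$ and $\prod_{j=1}^p \inparen{\norm{c_j}_1+1}$, after which the sampling algorithm follows from Theorem~\ref{thm:equiv_count_sample}, whose reduction makes no assumption on how $\cC$ is presented.

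The core step is to collapse the $p$ linear constraints into a single one. Discard any constraint with $c_j = 0$, since it is either vacuous or makes $\cC$ empty; so assume $\norm{c_j}_1 \geq 1$. For each $j$, every $S$ has $c_j(S) \in \inbraces{-\norm{c_j}_1, \ldots, \norm{c_j}_1}$, a window of $W_j \defeq 2\norm{c_j}_1 + 1$ consecutive integers. Put $P_1 \defeq 1$, $P_j \defeq \prod_{k<j} W_k$, and define the combined cost vector $\tilde c \defeq \sum_{j=1}^p P_j\, c_j \in \Z^m$. Writing $a_j(S) \defeq c_j(S) + \norm{c_j}_1 \in \inbraces{0, \ldots, W_j-1}$, the quantity $\tilde c(S) + \sum_j P_j\norm{c_j}_1 = \sum_j P_j\, a_j(S)$ is exactly the mixed-radix integer with digit $a_j(S)$ in base $W_j$; as each digit is below its base there is no carrying, so $\tilde c(S)$ and the tuple $\inparen{c_1(S),\ldots,c_p(S)}$ determine each other. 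Hence, with $\tilde K \defeq \inbraces{\sum_j P_j t_j : t_j \in K_j \cap \inbraces{-\norm{c_j}_1, \ldots, \norm{c_j}_1}}$, we get $\cC = \inbraces{S \subseteq [m] : \tilde c(S) \in \tilde K}$. Moreover $\norm{\tilde c}_1 \le \sum_j P_j\norm{c_j}_1$, which — using $\norm{c_j}_1 \geq 1$ (so that $p = O(\log \prod_j(\norm{c_j}_1+1))$ and $W_j \le (\norm{c_j}_1+1)^2$) — is polynomial in $\prod_{j=1}^p \inparen{\norm{c_j}_1 + 1}$.

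Now $\cC$ is a single membership constraint $\tilde c(S) \in \tilde K$, so the algorithm of Theorem~\ref{thm:main} (and its extension to such constraints) solves $\ccount[\mu, \cC]$ in time polynomial in $m$ and $\norm{\tilde c}_1$: it recovers, by evaluating a univariate polynomial through the oracle for $g_\mu$ and interpolating, the $\mu$-weighted count of sets at each attainable value of $\tilde c$ — of which there are $O(\norm{\tilde c}_1)$ — and we sum those counts over the values lying in $\tilde K$, membership in which is decided by reading off the mixed-radix digits and testing them against the $K_j$. Feeding this counting routine into Theorem~\ref{thm:equiv_count_sample} produces the sampling algorithm. The delicate point is the encoding: the multipliers $P_j$ must be large enough that the digit windows do not overlap (so the one constraint faithfully encodes all $p$) yet small enough that $\norm{\tilde c}_1$ stays polynomially bounded — which is precisely why the running time scales with the \emph{product} $\prod_j(\norm{c_j}_1+1)$; negative costs cause no trouble, being absorbed by the shift $a_j(S) = c_j(S) + \norm{c_j}_1$.
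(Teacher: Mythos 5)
Your proof is correct, but it takes a genuinely different route from the paper's. The paper deliberately avoids reducing to a single constraint: it reproves Theorem~\ref{thm:main} in higher generality by introducing one formal variable $y_j$ per constraint, substituting $x_i \mapsto \prod_{j=1}^p y_j^{c^{(j)}_i}$ into $g_\mu$, and recovering all coefficients of the resulting $p$-variate polynomial $h(y_1,\ldots,y_p)$ by multivariate interpolation; the number of monomials, and hence the cost, is at most $\prod_{j=1}^p \inparen{\norm{c_j}_1+1}$. You instead collapse the $p$ constraints into one via a mixed-radix encoding $\tilde c = \sum_j P_j c_j$ and invoke the univariate result. This is essentially the "deduce it directly from Theorem~\ref{thm:main}" path that the paper explicitly sets aside as "significantly suboptimal" (and which the paper itself uses in the hardness direction, in Lemma~\ref{lemma:mixd}, where the blow-up is harmless). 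Your careful choice of radices $W_j = 2\norm{c_j}_1+1$ limits the loss: $\norm{\tilde c}_1$ is roughly the \emph{square} of $\prod_j\inparen{\norm{c_j}_1+1}$ (times a $\log$ for $p$), which still meets the corollary's stated bound of "polynomial in $m$ and $\prod_j\inparen{\norm{c_j}_1+1}$," so your argument is a valid proof of the statement as written — it just yields a worse exponent than the paper's multivariate interpolation. The supporting details are all sound: the no-carry argument makes the encoding injective on attainable cost tuples, negative entries are handled both by your shift and by the negative-cost case in the proof of Theorem~\ref{thm:main}, and your counting routine works for arbitrary $\mu$ given $g_\mu$, which is exactly what the counting-to-sampling reduction of Theorem~\ref{thm:equiv_count_sample} (via the conditioning step in Lemma~\ref{lemma:conditioning}) requires.
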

One particular class of families for which the above yields polynomial time sampling algorithms are partition families (families of bases of partition matroids) over constantly many parts (see Corollary~\ref{cor:partition_sample}).
An important open problem that remains is to come up with even faster algorithms.

Another application of Theorem~\ref{thm:main}, which we present Section~\ref{sec:matroids}, is to combinatorial sampling and counting problems. More precisely, we note that the indicator measure of bases of regular matroids has an efficiently computable generating polynomial; hence, we can
solve their corresponding budgeted versions of counting and sampling problems.

One may ask if the dependence on $\|c\|_1$ in  Theorem~\ref{thm:main} can be improved.
We prove that the answer to this question is  no in a very strong sense.
To state our hardness result, we introduce $\ecount$ -- a natural variant of the $\bcount$ problem -- in which the sum is over subsets of cost equal to a given value $C$ instead of at most $C$ (such a problem is no harder than $\bcount$).
We provide an approximation preserving reduction showing that $\ecount[\mu, c, C]$ is at least as hard as computing {\em mixed discriminants} of tuples of positive semidefinite (PSD) matrices when $c$ and $C$ are given in binary, and can be exponentially large in magnitude.
Recall that for a tuple of  $m \times m$ PSD matrices $A_1,\ldots,A_m$, their mixed discriminant is the coefficient of the monomial $\prod_{i=1}^m x_i$ in the polynomial $\det (\sum_{i=1}^m x_i A_i)$.
\begin{theorem}[Hardness of Counting under Budget Constraints]\label{thm:hardness}
 $\bcount[\mu, c, C]$  is $\mathbf{\#P}-$hard. Moreover, when $\mu$ is a determinantal function, $\ecount[\mu,c, C]$ is at least as hard to approximate as mixed discriminants of tuples of PSD matrices.
\end{theorem}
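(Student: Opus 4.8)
The plan is to handle the two assertions separately. The $\mathbf{\#P}$-hardness of $\bcount$ is immediate: take $\mu$ to be the all-ones measure, which is determinantal with kernel $L = I_m$ (since $\det(I_{S,S}) = 1$ for every $S$) and has the trivially evaluable generating polynomial $g_\mu(x) = \prod_{i\in[m]}(1+x_i)$. For this $\mu$, $\bcount[\mu,c,C]$ is exactly the number of subsets $S\subseteq[m]$ with $c(S)\le C$, i.e.\ the $\#$KNAPSACK counting problem, which is $\mathbf{\#P}$-hard already for nonnegative integer costs (it contains $\#$SUBSET-SUM via a difference of two such counts); the same differencing also gives the $\ecount$ version. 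So the remaining, quantitative work is the second claim.

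For the reduction from mixed discriminants, given PSD matrices $A_1,\dots,A_m$ I would factor $A_i = W_iW_i^\top$ with $W_i\in\R^{m\times m}$, concatenate the columns of $W_1,\dots,W_m$ into an $m\times m^2$ matrix $V$ organized as $m$ blocks of $m$ columns, and set $L\defeq V^\top V$ (PSD, of size $m^2$, and computable from the $A_i$ in polynomial time), with determinantal measure $\mu(T)\defeq\det(L_{T,T})$. Substituting into $\det(VXV^\top)$ the diagonal matrix $X$ carrying the variable $x_i$ on every coordinate of block $i$ recovers $\det(\sum_i x_iA_i)$, and the Cauchy--Binet formula expands $\det(VXV^\top)$ as $\sum_{|T|=m}\det(V_{\cdot,T})^2\prod_{(i,j)\in T}x_i$. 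Extracting the coefficient of $\prod_{i\in[m]}x_i$, only the ``transversals'' $T$ (exactly one column per block) contribute, and since $\det(V_{\cdot,T})^2=\det((V^\top V)_{T,T})=\mu(T)$ when $|T|=m$, I obtain the exact identity $\mixdet(A_1,\dots,A_m)=\sum_{T\text{ transversal}}\mu(T)$; this incidentally re-proves nonnegativity of the mixed discriminant of a PSD tuple, so that the approximation notion is meaningful.

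The last step --- which I regard as the conceptual crux --- is to realize the transversal constraint as a \emph{single} equality constraint $c(T)=C$ over the ground set $[m^2]$: assign cost $(m+1)^i$ to every column of block $i$ and set $C\defeq\sum_{i=1}^m(m+1)^i$. Since each block contributes at most $m<m+1$ elements, $c(T)$ read in base $m+1$ has digit $|T\cap\text{block}_i|$ in position $i$, so $c(T)=C$ forces $|T\cap\text{block}_i|=1$ for all $i$, i.e.\ $T$ is a transversal; hence $\ecount[\mu,c,C]=\mixdet(A_1,\dots,A_m)$ exactly, and any multiplicative $(1\pm\eps)$-approximation of $\ecount[\mu,c,C]$ yields the same approximation of the mixed discriminant. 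The reduction runs in polynomial time because, although $\norm{c}_1=(m+1)^{\Theta(m)}$ is exponential, $c$ and $C$ have only $O(m\log m)$ bits --- and it is precisely this forced exponential growth of $\norm{c}_1$ that makes the hardness complementary to the unary-time algorithm of Theorem~\ref{thm:main}. (Specializing the $A_i$ to diagonal matrices turns $\mixdet$ into the permanent, giving an alternative route to $\mathbf{\#P}$-hardness even for $0/1$ data.) Beyond routine linear algebra, the only external ingredients are the Cauchy--Binet formula and the identity $\det(A^\top A)=\det(A)^2$ for square $A$.
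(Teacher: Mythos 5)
Correct, and essentially the paper's own route: Lemma~\ref{lemma:mixd} performs the same reduction --- rank-one (Cholesky) decomposition of the $A_i$, a transversal/partition family over an $m^2$-element ground set, and its collapse into a single equality constraint via base-$b$ costs of exponential magnitude but polynomial bit-length --- and Theorem~\ref{thm:hardness} then follows exactly as you argue. The only (harmless) divergences are that you derive $\mixdet(A_1,\dots,A_m)=\sum_{T\ \text{transversal}}\mu(T)$ by Cauchy--Binet applied to the generating polynomial rather than by multilinearity of the mixed discriminant (Fact~\ref{fact_mixed}), which also spares you the paper's $n!$ normalization, and that you obtain the $\mathbf{\#P}$-hardness clause directly from \#KNAPSACK with $L=I$, whereas the paper deduces it from the $\mathbf{\#P}$-hardness of mixed discriminants together with the same observation you make that $\ecount$ reduces to two calls of $\bcount$.
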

To prove this result we show an \textit{equivalence} between the counting problem corresponding to partition-constrained DPPs (with a large, super-constant number of parts) and computing mixed discriminants.
Unlike permanents \cite{JSV04}, no efficient approximation scheme is known for estimating mixed discriminants and there is some evidence \cite{Gurvits05} that there may be none.
To further understand to what extent $g_\mu$ is the cause of computational hardness, in Section~\ref{sec:hardness_trees} (see Theorem~\ref{thm:trees}) we provide another hardness result; it considers a $\mu$ that is a $0/1$ indicator function for spanning trees in a graph (with efficiently computable $g_\mu$).
We prove that  $\ecount[\mu, c, C]$ is at least as hard to approximate as the number of perfect matchings in general (non-bipartite) graphs, which is another problem for which existence of an FPRAS is open.

Finally, this connection between partition-DPPs and mixed discriminants, along with our results to efficiently solve the counting problem for partition-DPPs with constantly many parts, gives us other applications of independent interest.
1) The ability to compute the top few coefficients of the {\em mixed characteristic polynomial} that arises in the proof of the Kadison-Singer problem; see Theorem \ref{thm:higherOrderThm}.
2) The ability to compute in polynomial time, the mixed characteristic polynomial \textit{exactly}, when the linear matrix subspace spanned by the input matrices has constant dimension; see Theorem \ref{thm:MCPreduction} and Corollary \ref{cor:rank}.

\subsection{Other Related Work}
\label{sec:relatedwork}

For sampling from $k$-DPPs there are exact polynomial time algorithms (see \cite{HKPV05,DR10,KuleszaTaskar12}). 
There is also recent work on faster approximate MCMC algorithms for sampling from various unconstrained discrete point processes (see \cite{RebeschiniK15} and the references therein),  and algorithms that are efficient for constrained DPPs under certain restrictions on the kernel and constraints (see \cite{LJS16} and the references therein). 
To the best of our knowledge, our result is the first efficient sampling algorithm that works for all kernels and for any constraint set with small description complexity.
On the practical side, diverse subset selection and DPPs arise in a variety of contexts such as structured prediction \cite{PrasadJB14}, recommender systems \cite{GartrellRecSys16} and active learning \cite{WeiIyerBilmes15}, where the study of DPPs with additional constraints is of importance. 
\section{Counting with Budget Constraints}
\label{sec:counting}
\begin{proofof}{of Theorem~\ref{thm:main}}
Let us first consider the case in which the cost vector $c$ is nonnegative, i.e.,  $c\in \N^m$. We introduce a new variable $z$ and consider the polynomial
$$h(z) \defeq g_\mu(z^{c_1}, z^{c_2}, \ldots, z^{c_m}).$$
Since $g_\mu(x_1, \ldots, x_m)=\sum_{S\subseteq [m]} \mu(S) \prod_{i\in S} x_i$, we have
$$h(z) = \sum_{S\subseteq [m]} \mu(S) \prod_{i \in S} z^{c_i} = \sum_{S\subseteq [m]} \mu(S) z^{c(S)}  = \sum_{0\leq d \leq \norm{c}_1} z^d  \sum_{S:~c(S)=d} \mu(S).$$
Hence, the coefficient of $z^d$ in $h(z)$ is equal to the sum of $\mu(S)$ over all sets $S$ such that $c(S)=d$.
In particular, the output is the sum of coefficients over $d\leq C$.

It remains to show how to compute the coefficients of $h$.
Note that we do not have direct access to $g_\mu$.
However, we can evaluate $g_\mu(x)$ at any input $x\in \R^m$, which in turn allows us to compute $h(z)$ for any input $z\in \R$.
Since $h(z)$ is a polynomial of degree at most $\norm{c}_1$, in order to recover the coefficients of $h$, it suffices to evaluate it at $\norm{c}_1+1$ inputs and perform interpolation.
When using FFT, the total running time becomes:
$$(\norm{c}_1+1)\cdot T_\mu +\tO(\norm{c}_1),$$
where $T_\mu$ is the running time of the evaluation oracle for $g_\mu$.

In order to deal with the case in which $c$ has negative entries, consider a modified version of $h$:
$$h(z) \defeq z^{\norm{c}_1} g_\mu(z^{c_1}, z^{c_2}, \ldots, z^{c_m}).$$
Clearly, $h(z)$ is a polynomial of degree at most $2\cdot \norm{c}_1$ whose  coefficients encode the desired output.
\end{proofof}

\noindent 
We also state a simple consequence of the above proof that is often convenient to work with.
\begin{corollary}\label{cor:equality}
There is an algorithm that, given a vector $c\in \Z^m$, a value $C\in \Z$ and oracle access to $g_\mu$ computes the sum $\sum_{S:~c(S)=C} \mu(S)$ in time polynomial with respect to $m$ and $\norm{c}_1$. 
\end{corollary}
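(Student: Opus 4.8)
The plan is to derive Corollary~\ref{cor:equality} directly from the proof of Theorem~\ref{thm:main} by observing that the same construction already separates the mass of $\mu$ by cost. Recall that in that proof we introduced the auxiliary univariate polynomial $h(z) = g_\mu(z^{c_1}, \ldots, z^{c_m})$ (or, when $c$ has negative entries, the shifted version $z^{\norm{c}_1} g_\mu(z^{c_1}, \ldots, z^{c_m})$), and we verified the identity
$$h(z) = \sum_{S \subseteq [m]} \mu(S) z^{c(S)} = \sum_{d} z^{d+\norm{c}_1} \sum_{S:~c(S)=d} \mu(S),$$
so that the coefficient of $z^{C+\norm{c}_1}$ (respectively $z^C$ in the nonnegative case) in $h$ is precisely $\sum_{S:~c(S)=C}\mu(S)$, the quantity we want to compute. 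The point is that Theorem~\ref{thm:main} recovers \emph{all} coefficients of $h$ anyway and then sums those with $d \le C$; here we simply read off the single coefficient indexed by $C$ instead.

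Concretely, I would proceed in three short steps. First, handle the nonnegative case $c \in \N^m$: form $h(z)$ as above, note it has degree at most $\norm{c}_1$, evaluate $g_\mu$ (hence $h$) at $\norm{c}_1 + 1$ distinct points using the evaluation oracle, and interpolate to obtain all coefficients; output the coefficient of $z^C$ (which is $0$ if $C < 0$ or $C > \norm{c}_1$). Second, handle general $c \in \Z^m$ by the same shift trick used in the theorem: replace $h$ by $z^{\norm{c}_1} g_\mu(z^{c_1}, \ldots, z^{c_m})$, which is an honest polynomial of degree at most $2\norm{c}_1$, and read off the coefficient of $z^{C + \norm{c}_1}$. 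Third, tally the running time: $(\norm{c}_1 + 1) \cdot T_\mu + \tO(\norm{c}_1)$ in the nonnegative case and $O(\norm{c}_1) \cdot T_\mu + \tO(\norm{c}_1)$ in general (using FFT-based interpolation), where $T_\mu$ is the cost of one call to the evaluation oracle for $g_\mu$ — polynomial in $m$ and $\norm{c}_1$ as claimed.

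There is essentially no obstacle here — this is a bookkeeping corollary of the argument already given, so the bulk of a written proof is just pointing at the identity for $h(z)$ and noting that extracting one coefficient is no harder than extracting all of them. The only mild subtlety worth a sentence is to confirm that $z=0$ should be avoided (or handled separately) when $c$ has zero or negative entries so that the evaluations $z^{c_i}$ are well-defined; choosing the $\norm{c}_1+1$ (resp. $2\norm{c}_1+1$) interpolation nodes to be nonzero, e.g. roots of unity or $\{1, 2, \ldots\}$, resolves this immediately. Hence the statement follows.
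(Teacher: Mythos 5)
Your proposal is correct and matches the paper's intent exactly: the paper itself states this corollary as an immediate consequence of the proof of Theorem~\ref{thm:main}, since that proof already shows the coefficient of $z^d$ in $h(z)$ (suitably shifted when $c$ has negative entries) equals $\sum_{S:\,c(S)=d}\mu(S)$, and one simply reads off the single coefficient indexed by $C$ rather than summing over $d\leq C$. Your added remark about choosing nonzero interpolation nodes is a sensible, if minor, point of care.
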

In the above, note the equality $c(S)=C$ instead of $c(S)\leq C$ as in $\bcount$.

\section{Determinantal Point Processes}\label{sec:applications}

A Determinantal Point Process (DPP) is a probability distribution $\mu$ over subsets of $[m]$ defined with respect to a symmetric positive semidefinite matrix $L\in \R^{m\times m}$ by
$\mu(S) \propto \det(L_{S,S})$; i.e.,
$$\mu(S) \defeq \frac{\det(L_{S,S})}{\sum_{T\subseteq [m]} \det(L_{T,T})}.$$
We will often use a different matrix to represent the measure $\mu$; let $V\in \R^{m \times n}$ be a matrix, such that $L=VV^\top$ (the Cholesky decomposition of $L$).  Then, $\det(L_{S,S})=\det(V_S V_S^\top)$.

An important open problem related to DPPs is the sampling problem under additional combinatorial constraints imposed on the ground set $[m]$. We prove that these problems are polynomial time solvable for succinct budget constraints, as in Theorem~\ref{thm:main}. We start by establishing the fact that generating polynomials for determinantal distributions are efficiently computable.
\begin{fact}\label{fact:gen_det}
Let $L \in \R^{m \times m}$ be a PSD matrix with $L=VV^\top$ for some $V\in \R^{m\times n}$. If $\mu: 2^{[m]} \to \R_{\geq 0}$ is defined as $\mu(S) \defeq \det(L_{S,S})$ then $\det(V^\top XV +I) = \sum_{S\subseteq [m]} x^S \mu(S)$, where $X$ is the diagonal matrix of indeterminates $X=\diag{x_1, \ldots, x_m}$ and $I$ is the $n\times n$ identity matrix.
\end{fact}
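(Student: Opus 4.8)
The plan is to reduce the $n\times n$ determinant $\det(V^\top X V + I)$ to an $m\times m$ determinant by Sylvester's determinant identity, and then expand that determinant in principal minors. Throughout, all the identities used are identities of polynomials in the indeterminates $x_1,\dots,x_m$, so they may be read either over the ring $\R[x_1,\dots,x_m]$ or, equivalently, verified for all real assignments of $x$ (two polynomials agreeing everywhere on $\R^m$ are equal).

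\textbf{Step 1: pass to an $m\times m$ determinant.} Write $V^\top X V = (V^\top)(XV)$ as the product of an $n\times m$ matrix $V^\top$ and an $m\times n$ matrix $XV$. Sylvester's identity $\det(I_n + AB) = \det(I_m + BA)$, applied with $A = V^\top$ and $B = XV$, gives
$$\det(V^\top X V + I) = \det\!\big(I_n + V^\top(XV)\big) = \det\!\big(I_m + (XV)V^\top\big) = \det(I_m + XL),$$
where the last equality uses $VV^\top = L$.

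\textbf{Step 2: expand in principal minors and evaluate.} For any $m\times m$ matrix $M$ one has the principal-minor expansion
$$\det(I_m + M) = \sum_{S\subseteq[m]} \det(M_{S,S}),$$
which follows by writing each column of $I_m + M$ as the sum of a standard basis vector and the corresponding column of $M$ and using multilinearity of the determinant in the columns (equivalently, it is the value at $t=1$ of $\det(tI_m + M) = \sum_{k} t^{m-k}\sum_{|S|=k}\det(M_{S,S})$). Apply this to $M = XL$. Since $X = \diag{x_1,\dots,x_m}$ is diagonal, $(XL)_{S,S} = \diag{x_i : i\in S}\cdot L_{S,S}$, hence $\det\big((XL)_{S,S}\big) = \big(\prod_{i\in S} x_i\big)\det(L_{S,S}) = x^S\,\mu(S)$. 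Summing over $S$ yields $\det(V^\top X V + I) = \sum_{S\subseteq[m]} x^S\,\mu(S)$, as claimed.

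\textbf{Main obstacle.} There is essentially no obstacle here: the statement is a classical computation, and the only points needing a sentence of care are that Sylvester's identity and the principal-minor expansion are being invoked as polynomial identities in the $x_i$, and the short justification of the principal-minor expansion of $\det(I+M)$ itself (column-by-column multilinearity, or the characteristic-polynomial viewpoint). One could alternatively bypass Sylvester's identity by applying the Cauchy--Binet formula to an appropriate factorization, but the route above is the cleanest.
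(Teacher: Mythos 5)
Your proof is correct and follows essentially the same route as the paper's: Sylvester's determinant identity to pass to an $m\times m$ determinant, followed by the principal-minor expansion of $\det(I+M)$. The only cosmetic difference is that the paper symmetrizes first, writing $V^\top X V=(\sqrt{X}V)^\top(\sqrt{X}V)$ and expanding the symmetric matrix $(\sqrt{X}V)(\sqrt{X}V)^\top$, whereas you expand the non-symmetric $XL$ directly --- which is, if anything, slightly cleaner, since it avoids interpreting $\sqrt{X}$ for indeterminate (or negative) entries.
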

\begin{proof}
We start by applying the Sylvester's determinant identity
$$\det(V^\top XV +I) =\det\inparen{\inparen{\sqrt{X} V}\inparen{\sqrt{X}V}^\top +I}.$$
It is well known that for a symmetric matrix $A\in \R^{m\times n}$ the coefficient of $t^k$  in the polynomial $\det(A+tI)$ is equal to $\sum_{|S|=n-k} \det(A_{S,S})$. Applying this result to $A=\inparen{\sqrt{X} V}\inparen{\sqrt{X}V}^\top$, we get
$$\det(A_{S,S}) =x^S \det(V_S V_S^\top)=x^S\det(L_{S,S}),$$
which concludes the proof.
\end{proof}
Now we are ready to deduce Corollary~\ref{cor:dpps}.
\begin{proofof}{of Corollary~\ref{cor:dpps}}
A polynomial time counting algorithm follows directly from Theorem~\ref{thm:main} and Fact~\ref{fact:gen_det}. To deduce sampling we apply the result on equivalence between sampling and counting Theorem~\ref{thm:equiv_count_sample}. In fact when applied to an exact counting algorithm we obtain an exact sampling procedure.
\end{proofof}
We move to the general result on sampling for linear families -- Corollary~\ref{cor:dpps_gen}. One can deduce it directly from Theorem~\ref{thm:main}, but this leads to a significantly suboptimal algorithm. Instead we take a different path and reprove Theorem~\ref{thm:main} in a slightly higher generality.

\begin{proofof}{of Corollary~\ref{cor:dpps_gen}}
We will show how to solve the counting problem -- sampling will then follow from Theorem~\ref{thm:equiv_count_sample}. Also, for simplicity we assume that all the entries in the cost vectors are nonnegative, this can be extended to the general setting as in the proof of Theorem~\ref{thm:main}.

Let $g$ be the generating polynomial of the determinantal function $\mu(S) = \det(L_{S,S})$, which is efficiently computable by Fact~\ref{fact:gen_det}. For notational clarity we will use superscripts to index constraints. For every constraint ``$c^{(j)}(S)\in K_i$'' ($j=1,2,\ldots, p$) introduce a new formal variable $y_j$. For every index $i\in [m]$ define the monomial:
\[
s_i = \prod_{j=1}^p y_j^{c^{(j)}_i}.
\]
The above encodes the cost of element $i$ with respect to all cost vectors $c^{(j)}$ for $j=1,2,\ldots,p$. Consider the polynomial $h(y_1, \ldots, y_p) = g(s_1, s_2, \ldots, s_m)$. It is not hard to see that the coefficient of a given monomial $\prod_{j=1}^p y_j^{d_j}$ in $h$ is simply the sum of $\mu(S)$ over all sets $S$ satisfying $c^{(1)}(S)=d_1, c^{(2)}(S)=d_2, \ldots, c^{(p)}(S)=d_p$. Hence the solution to our counting problem is simply the sum of certain coefficients of $h$. It remains to show how to recover all the coefficients efficiently.

Note that we can efficiently evaluate the polynomial $h$ at every input $(y_1, \ldots, y_p)\in \R^p$. One can then apply interpolation to recover all coefficients of $h$. The running time is polynomial in the total number of monomials in $h$, which can be bounded from above by
$\prod_{j=1}^p\inparen{\norm{c^{(j)}}_1+1}.$
\end{proofof}
We derive now one interesting application of Corollary~\ref{cor:dpps_gen} -- sampling from partition constrained DPPs. Let us first define {\it partition families} formally.
\begin{definition}
Let $[m]=P_1\cup P_2\cup \cdots \cup P_p$ be a partition of $[m]$ into disjoint,  nonempty sets and let $b_1, b_2, \ldots, b_p$ be integers such that $0 \leq b_i \leq |P_i|$. A family of sets of the form
\[
\cC = \{S\subseteq [m]: |S \cap P_j|=b_j,~\text{for every $j = 1, 2, \ldots, p$}\}
\]
is called a partition family.
\end{definition}
We prove the following consequence of Corollary~\ref{cor:dpps_gen}, which asserts that polynomial time counting and sampling is possible for DPPs under partition constraints for constant $p$.

\begin{corollary}\label{cor:partition_sample}
Given a DPP defined by $L\in \R^{m\times m}$ and a partition family $\cC$ with a constant number of parts, there exists a polynomial time sampling algorithm for the distribution
$$\mu_\cC(S) \defeq \frac{\det(L_{S,S})}{\sum_{T\in \cC} \det(L_{T,T})}~~~~~~\mbox{for $S\in \cC$. }$$
\end{corollary}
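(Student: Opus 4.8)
The plan is to realize a partition family as a special case of the linear families handled by Corollary~\ref{cor:dpps_gen}, and then invoke that corollary directly. Given a partition $[m] = P_1 \cup \cdots \cup P_p$ with target sizes $b_1,\ldots,b_p$, I would define, for each part $j$, the cost vector $c^{(j)} \in \Z^m$ to be the indicator of $P_j$, i.e.\ $c^{(j)}_i = 1$ if $i \in P_j$ and $c^{(j)}_i = 0$ otherwise, and set $K_j = \{b_j\}$. Then $c^{(j)}(S) = |S \cap P_j|$, so the condition $c^{(j)}(S) \in K_j$ for all $j$ is exactly the defining condition of the partition family $\cC$. Hence $\cC$ is a linear family in the sense of~\eqref{eq:linear_fam}, and Corollary~\ref{cor:dpps_gen} applies.

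It remains only to check that the running time is polynomial. The bound from Corollary~\ref{cor:dpps_gen} is polynomial in $m$ and in $\prod_{j=1}^p (\norm{c^{(j)}}_1 + 1)$. Since $c^{(j)}$ is the $0/1$ indicator of $P_j$, we have $\norm{c^{(j)}}_1 = |P_j|$, so the product is $\prod_{j=1}^p (|P_j| + 1) \le (m+1)^p$. For constant $p$ this is polynomial in $m$, which gives the claimed polynomial-time counting algorithm; sampling then follows from the equivalence between sampling and counting (Theorem~\ref{thm:equiv_count_sample}), just as in the proof of Corollary~\ref{cor:dpps}.

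There is essentially no main obstacle here: the only mild subtlety is confirming that the sets $P_j$ partition $[m]$ (in particular are disjoint and cover everything) so that the $p$ constraints are simultaneously satisfiable precisely by the sets in $\cC$ — this is immediate from the definition of a partition family, including the hypothesis $0 \le b_j \le |P_j|$ which guarantees $\cC$ is nonempty. One could alternatively drop one of the $p$ constraints, since $\sum_j |S \cap P_j| = |S|$ makes them not fully independent, but this optimization is not needed for the statement. I would also remark that the exponent $p$ in $(m+1)^p$ is the reason the constant-$p$ hypothesis is essential, foreshadowing the hardness result (Theorem~\ref{thm:hardness}) for the super-constant case.
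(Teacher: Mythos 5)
Your proposal is correct and matches the paper's proof essentially verbatim: both encode the partition family as a linear family via the indicator cost vectors $c_j = 1_{P_j}$ with $K_j = \{b_j\}$, invoke Corollary~\ref{cor:dpps_gen}, and bound the running time by $\prod_{j=1}^p(|P_j|+1) = O(m^p)$, which is polynomial for constant $p$.
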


\begin{proof}
In light of Corollary~\ref{cor:dpps_gen} it suffices to show that every partition family has a succinct representation as a linear family. We show that it is indeed the case.
Consider a partition family $\cC$ induced by the partition $P_1 \cup P_2 \cup \ldots \cup P_p=[m]$ and numbers $b_1, b_2, \ldots, b_p$. Define the following cost vectors: $c_j = 1_{P_j}$, for $j=1,2,\ldots, p$, i.e., the indicator vectors of the sets $P_1, P_2, \ldots, P_p$. Moreover define $K_j$ to be $\{b_j\}$ for every $j=1,2,\ldots, p$. It is then easy to see that ``$c_j(S)\in K_j$'' is implementing the constraint $|P_j \cap S| = b_j$. In other words the family $\cC$ is equal to the linear family defined by cost vectors $c_1, c_2, \ldots, c_p$ and sets $K_1, K_2, \ldots, K_p$.
It remains to observe that $\norm{c_j}_1 = |P_j| \leq m$ and hence $\prod_{j=1}^p \inparen{\norm{c_j}+1} = O(m^p)$. Since $p=O(1)$ the algorithm from Corollary~\ref{cor:dpps_gen} runs in polynomial time.
\end{proof}

\section{Hardness Result}\label{sec:hardnessResults}
In this section we study hardness of $\bcount[\mu, c, C]$. Theorem~\ref{thm:main} implies that $\bcount$ is polynomial time solvable whenever we measure the complexity with respect to the unary encoding length of the cost vector $c$. Here we prove that if $c$ is given in binary, the problem becomes $\mathbf{\#P}-$hard. Moreover, existence of an efficient approximation scheme for a closely related problem (instead of counting all objects of cost {\em  at most} $C$, count objects of cost {\em exactly} $C$) would imply existence of such schemes for counting perfect matchings in non-bipartite graphs (see Section~\ref{sec:hardness_trees}) and for computing mixed discriminants. In both cases, these are notorious open questions and the latter is believed to be unlikely.

\subsection{Mixed Discriminants}\label{sec:mixed_discr}
We relate the  $\bcount$ problem to the well studied problem of computing mixed discriminants of PSD matrices and prove Theorem~\ref{thm:hardness}. Recall the definition:
\begin{definition}\label{def:mixed_disc}
Let $A_1, A_2, \ldots, A_m \in \R^{d\times d}$ be symmetric matrices of dimension $d$. The mixed discriminant of a tuple $(A_1, A_2, \ldots, A_d)$ is defined as
$$D(A_1,A_2, \ldots, A_d)\defeq \frac{\partial^d}{\partial z_1 \ldots \partial z_d}\det(z_1 A_1 +z_2A_2+\cdots+z_dA_d).$$
\end{definition}Computing mixed discriminants of PSD matrices is known to be $\mathbf{\#P}$-hard, since they can encode the permanent. However, as opposed to the permanent, there is no FPRAS known for computing mixed discriminants, and the best polynomial time approximation algorithms by~\cite{Barvinok97,GS02} have an exponentially large approximation ratio.

The main technical component in our proof of Theorem~\ref{thm:hardness} is the following lemma.
\begin{lemma}\label{lemma:mixd}
There is a polynomial time reduction, which given a tuple $(A_1, \ldots, A_n)$ of PSD $n\times n$ matrices outputs a PSD matrix $L\in \R^{m\times m}$, a cost vector $c\in \Z^m$ and a cost value $C\in \Z$ such that
\[
\alpha \cdot D(A_1, A_2, \ldots, A_n) = \sum_{S\subseteq [m],~ c(S)=C} \mu(S),
\]
where $\mu(S) = \det(L_{S,S})$, for $S\subseteq [m]$, and $\alpha$ is an efficiently computable scalar.  Moreover, $\norm{c}_1 \leq 2^{O(n \log n)}$.
\end{lemma}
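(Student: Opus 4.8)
The plan is to realize $D(A_1,\dots,A_n)$ as the coefficient of the monomial $z_1 z_2 \cdots z_n$ in $\det\big(\sum_{i=1}^n z_i A_i\big)$ --- which it is, since this determinant is homogeneous of degree $n$ in $z$, so the partial-derivative definition of the mixed discriminant simply extracts that coefficient --- and then to express this coefficient as a sum of principal minors $\det(L_{S,S})$ of a Gram matrix $L$, ranging over the sets $S$ that are transversals of a partition of the ground set; the transversal condition is then packed into a single equality constraint $c(S) = C$ by a mixed-radix encoding. The one technical preliminary is to present each $A_i$ in the form $A_i = W_i W_i^\top$ with $W_i$ \emph{rational} and with polynomially many columns. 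To this end, first compute by symmetric Gaussian elimination (in polynomial time over $\mathbb{Q}$) a decomposition $A_i = U_i D_i U_i^\top$ with $U_i$ invertible and rational and $D_i = \Diag(d_1,\dots,d_n)$ a nonnegative rational diagonal matrix --- such a decomposition always exists for a PSD matrix --- and then write each positive $d_k$ as a sum of four rational squares, which is possible for every nonnegative rational by Lagrange's four-square theorem. Collecting the resulting rank-one pieces expresses $A_i$ as a sum of at most $4n$ terms $w_{ij} w_{ij}^\top$ with all $w_{ij}$ rational.

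Next I would assemble the single matrix $\hat W$ whose columns are all the vectors $w_{ij}$ (so $\hat W$ is $n \times m$ with $m \le 4n^2$), write $P_i \subseteq [m]$ for the set of columns coming from $A_i$, and set $L \defeq \hat W^\top \hat W$, a rational PSD matrix, and $\mu(S) \defeq \det(L_{S,S})$. Writing $\hat W_S$ for the submatrix of columns indexed by $S$, and applying the weighted Cauchy--Binet identity with the diagonal matrix $\hat Z$ that assigns weight $z_i$ to every column in $P_i$ --- so that $\hat W \hat Z \hat W^\top = \sum_i z_i W_i W_i^\top = \sum_i z_i A_i$ --- yields the polynomial identity
$$\det\Big(\sum_{i=1}^n z_i A_i\Big) \;=\; \sum_{\substack{S \subseteq [m] \\ |S| = n}} \Big(\prod_{i=1}^n z_i^{\,|P_i \cap S|}\Big)\,\det(\hat W_S)^2 \;=\; \sum_{\substack{S \subseteq [m] \\ |S| = n}} \Big(\prod_{i=1}^n z_i^{\,|P_i \cap S|}\Big)\,\mu(S),$$
using $\det(\hat W_S)^2 = \det\big((\hat W^\top\hat W)_{S,S}\big) = \mu(S)$ whenever $|S| = n$. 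Extracting the coefficient of $z_1 z_2 \cdots z_n$ forces $|P_i \cap S| = 1$ for every $i$, so $D(A_1,\dots,A_n)$ equals $\sum_S \mu(S)$ over the transversals $S$ of the partition $[m] = P_1 \cup \cdots \cup P_p$, where $p \le n$ (if some $A_i = 0$ the corresponding $P_i$ is empty, there is no transversal, and indeed $D(A_1,\dots,A_n) = 0$).

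Finally, to convert the transversal condition into the single equality constraint used by $\ecount$, fix a base $B$ with $B > \max_i |P_i|$, say $B = 4n+1$, and set $c_\ell \defeq B^{\,i-1}$ for $\ell \in P_i$ and $C \defeq \sum_{i=1}^p B^{\,i-1}$. Then $c(S) = \sum_i B^{\,i-1} |P_i \cap S|$, and since every ``digit'' $|P_i \cap S|$ lies in $\{0,1,\dots,B-1\}$, uniqueness of base-$B$ expansions gives $c(S) = C$ if and only if $|P_i \cap S| = 1$ for all $i$. Hence $\sum_{S:\,c(S) = C} \mu(S) = D(A_1,\dots,A_n)$, so one may even take $\alpha = 1$, and $\norm{c}_1 = \sum_i |P_i|\,B^{\,i-1} \le 4n \cdot B^{\,n-1} \cdot \tfrac{B}{B-1} = 2^{O(n \log n)}$, as claimed; every step --- the congruence decompositions, the four-square representations, and the construction of $\hat W$, $L$, $c$, $C$ --- runs in polynomial time. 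The step I expect to require the most care (though it is not deep) is this rationality bookkeeping: a naive Cholesky-type factorization of $A_i$ would introduce square roots and render $L = \hat W^\top \hat W$ irrational, and it is precisely the four-square theorem that lets us keep $L$ over $\mathbb{Q}$ while holding the number of columns --- hence the base $B$ and the bound $\norm{c}_1 \le 2^{O(n\log n)}$ --- under polynomial control; the remainder is a routine combination of Cauchy--Binet and positional encoding.
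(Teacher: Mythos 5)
Your proposal is correct and follows essentially the same route as the paper's proof: decompose each $A_i$ into rank-one summands, let $L$ be the Gram matrix of the resulting vectors, identify $D(A_1,\ldots,A_n)$ with the sum of $\det(L_{S,S})$ over transversals of the partition $[m]=P_1\cup\cdots\cup P_n$ (you do this via Cauchy--Binet and coefficient extraction, the paper via multilinearity of $D$ as in Fact~\ref{fact_mixed}), and collapse the $n$ partition constraints into a single equality $c(S)=C$ by a base-$B$ positional encoding with the same $2^{O(n\log n)}$ bound. The only substantive differences are that your $\alpha=1$ versus the paper's $\alpha=n!$ merely reflects the $\frac{1}{n!}$ normalization of the mixed discriminant implicitly used in Fact~\ref{fact_mixed} (either is fine, since $\alpha$ may be any efficiently computable scalar), and your four-square device for keeping $L$ rational addresses a bit-complexity point that the paper sidesteps by taking a real Cholesky decomposition.
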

Before proving Lemma~\ref{lemma:mixd} let us first state several important properties of mixed discriminants, which we will rely on; for proofs of these facts  we refer the reader to~\cite{Bapat89}.

\begin{fact}[Properties of Mixed Discriminants]\label{fact_mixed}
Let $A,B, A_1, A_2, \ldots, A_n$ be symmetric $n\times n$ matrices.
\begin{enumerate}[noitemsep,nolistsep]
\item $D$ is symmetric, i.e., 
\[
D(A_1, A_2, \ldots, A_n)=D(A_{\sigma(1)}, A_{\sigma(2)}, \ldots, A_{\sigma(n)}),~ \text{for any permutation $\sigma \in S_n$}.
\]
\item $D$ is linear with respect to every coordinate, i.e., 
\[
D(\alpha A+\beta B, A_2, \ldots, A_n)=\alpha D(A, A_2, \ldots, A_n)+\beta D(B, A_2, \ldots, A_n).
\]
\item If $A=\sum_{i=1}^n v_i v_i^\top \in \R^{n\times n}$ then we have: $\det(A) = n!~ D(v_1 v_1^\top , \ldots, v_n v_n^\top)$.
\end{enumerate}
\end{fact}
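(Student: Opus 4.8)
The plan is to reduce everything to coefficient extraction. Since $p(z_1,\dots,z_n)\defeq\det(z_1A_1+\cdots+z_nA_n)$ is a homogeneous polynomial of degree $n$ in the $z_i$, the operator $\frac{\partial^n}{\partial z_1\cdots\partial z_n}$ annihilates every monomial of $p$ except the squarefree one $z_1z_2\cdots z_n$, on which it acts as the identity. Hence $D(A_1,\dots,A_n)$ is exactly the coefficient of $z_1z_2\cdots z_n$ in $\det(\sum_i z_iA_i)$ (with the normalization written in Definition~\ref{def:mixed_disc}; under the equivalent convention $D=\frac{1}{n!}\frac{\partial^n}{\partial z_1\cdots\partial z_n}\det(\cdot)$ used in~\cite{Bapat89} it is $\frac{1}{n!}$ times this coefficient, which is the source of the factor $n!$ in Part~3). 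With this reformulation all three properties become elementary polynomial bookkeeping, and none of them is a real obstacle.

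Part~1 (symmetry): a permutation $\sigma\in S_n$ of the arguments $A_1,\dots,A_n$ has the same effect on $p$ as the relabeling $z_i\mapsto z_{\sigma^{-1}(i)}$ of the variables; since both the monomial $z_1\cdots z_n$ and the operation of extracting its coefficient are invariant under permuting $z_1,\dots,z_n$, the value is unchanged. Part~2 (multilinearity): by Part~1 it suffices to treat the first slot. Homogeneity is immediate --- replacing $A_1$ by $\lambda A_1$ turns $p(z_1,z_2,\dots,z_n)$ into $p(\lambda z_1,z_2,\dots,z_n)$, whose $z_1z_2\cdots z_n$-coefficient is $\lambda$ times that of $p$. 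For additivity, set $A_1=A+B$, introduce fresh variables $s,t$, and consider the homogeneous degree-$n$ polynomial $q(s,t,z_2,\dots,z_n)\defeq\det(sA+tB+z_2A_2+\cdots+z_nA_n)$; the substitution $s=t=z_1$ recovers $\det(z_1(A+B)+\sum_{i\ge2}z_iA_i)$, and a monomial $s^at^bz_2^{a_2}\cdots z_n^{a_n}$ of $q$ becomes $z_1^{a+b}z_2^{a_2}\cdots z_n^{a_n}$. Collecting the monomials that produce $z_1z_2\cdots z_n$ --- those with $a_2=\cdots=a_n=1$ and $(a,b)\in\{(1,0),(0,1)\}$ --- yields $[s\,z_2\cdots z_n]q+[t\,z_2\cdots z_n]q$, and setting $t=0$ (resp.\ $s=0$) identifies these two coefficients with $D(A,A_2,\dots,A_n)$ and $D(B,A_2,\dots,A_n)$; combined with homogeneity this is the claimed linearity.

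Part~3 (rank-one evaluation): let $V=[\,v_1\mid v_2\mid\cdots\mid v_n\,]\in\R^{n\times n}$ and $Z=\diag{z_1,\dots,z_n}$. Then $\sum_{i=1}^nz_iv_iv_i^\top=VZV^\top$, so $\det\!\big(\sum_iz_iv_iv_i^\top\big)=\det(V)\det(Z)\det(V^\top)=\det(V)^2\prod_{i=1}^nz_i$; the coefficient of $z_1\cdots z_n$ is $\det(V)^2=\det(VV^\top)=\det(A)$, and multiplying by the normalization factor $n!$ gives $\det(A)=n!\,D(v_1v_1^\top,\dots,v_nv_n^\top)$.

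The main obstacle: there isn't a substantive one --- these are classical facts (cf.~\cite{Bapat89}) --- but the step that most needs care is the variable-substitution accounting in Part~2, namely verifying that precisely the monomials of $q$ of $(s,t)$-bidegree $(1,0)$ or $(0,1)$ (with every other exponent equal to $1$) map to $z_1z_2\cdots z_n$ under $s=t=z_1$, with no double-counting; a secondary point of care is fixing the normalization convention explicitly so that the constant $n!$ in Part~3 is consistent throughout.
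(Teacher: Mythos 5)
Your proposal is correct, and it necessarily differs from the paper, because the paper does not prove Fact~\ref{fact_mixed} at all --- it simply refers the reader to~\cite{Bapat89}. Your coefficient-extraction argument (identifying $D$ with the coefficient of the squarefree monomial $z_1z_2\cdots z_n$ in $\det(\sum_i z_iA_i)$, from which symmetry, multilinearity, and the rank-one evaluation all follow by elementary polynomial bookkeeping) is a clean, self-contained substitute for that citation; the bidegree accounting you do for additivity in Part~2 and the factorization $\sum_i z_i v_iv_i^\top = VZV^\top$ in Part~3 are both sound. You also correctly isolate the one genuine subtlety: as literally written, Definition~\ref{def:mixed_disc} omits the $\frac{1}{d!}$ normalization, and under that unnormalized convention Part~3 would read $\det(A)=D(v_1v_1^\top,\ldots,v_nv_n^\top)$ with no $n!$; the Fact as stated (and the later computation in Lemma~\ref{lemma:mixd}, which uses $D(v_{e_1}v_{e_1}^\top,\ldots,v_{e_n}v_{e_n}^\top)=\frac{1}{n!}\det(\sum_j v_{e_j}v_{e_j}^\top)$) is consistent with the normalized convention $D=\frac{1}{n!}\,\partial_{z_1}\cdots\partial_{z_n}\det(\cdot)$ of~\cite{Bapat89}, which is what the paper implicitly uses. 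So your proof is valid once that convention is fixed, and flagging the discrepancy is a useful observation rather than a gap; what the paper's approach buys is brevity (a citation), while yours buys a verifiable argument and an explicit reconciliation of the normalization.
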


\begin{proofof}{of Lemma~\ref{lemma:mixd}}
Consider a tuple $(A_1, A_2, \ldots, A_n)$ of PSD matrices. The first step is to decompose them into rank-one summands:
$$A_i=\sum_{j=1}^r v_{i,j}v_{i,j}^\top, $$
where $v_{i,j}\in \R^n$ for $1\leq i,j\leq n$ (some $v_{i,j}$'s can be zero if $\mathrm{rank}(A_i)<n$). This step can be performed using the Cholesky decomposition.

Let $M=\{(i,j):1\leq i,j \leq n\}$ and for every $i=1,2, \ldots, n$ define $P_i=\{i\} \times [n]$. We take  $m=|M|=n^2$ and define a family $\cC$  of $n-$subsets of $M$ to be
$$\cC = \{S\subseteq [m]: |S \cap P_i|=1 \mbox{ for every }i=1,2,\ldots,n\}.$$
Let $V$ denote an $m\times n$ matrix with rows indexed by $M$, for which the $e$th row is $v_e$ as above ($e\in M$, i.e.,  $e=(i,j)$ for some $i,j\in [n]$). We also set $L=VV^\top$, hence $L$ is an $m\times m$ symmetric,  PSD matrix.
Finally, let $\mu(S) = \det(L_{S,S}).$ Note that for sets $S$ of cardinality $n$ we have $$\mu(S)=\det(L_{S,S}) =\det(V_S V_S^\top)=\det(V_S^\top V_S) = \det \inparen{\sum_{e\in S} v_e v_e^\top }.$$
In the calculation below we rely on properties of mixed discriminants listed in Fact~\ref{fact_mixed} and on the fact that $|S|=n$ for $S\in \cC$.
\begin{align*}
D(A_1, A_2, \ldots, A_n) &=D\inparen{\sum_{j=1}^n v_{1,j} v_{1,j}^\top, \sum_{j=1}^n v_{2,j} v_{2,j}^\top,\ldots, \sum_{j=1}^n v_{n,j} v_{n,j}^\top}\\
&= \sum_{1\leq j_1, j_2, \ldots, j_n\leq n } D(v_{1,j_1} v_{1,j_1}^\top,v_{2,j_2} v_{2,j_2}^\top,\ldots, v_{n,j_n} v_{n,j_n}^\top )\\
&= \sum_{e_1 \in P_1,e_2 \in P_2, \ldots, e_n\in P_n} D(v_{e_1} v_{e_1}^\top,v_{e_2} v_{e_2}^\top,\ldots, v_{e_n} v_{e_n}^\top )\\
& = \sum_{\{e_1, e_2, \ldots, e_n\} \in \cC} \frac{1}{n!}\det(v_{e_1} v_{e_1}^\top+ v_{e_2} v_{e_2}^\top+\ldots+ v_{e_n} v_{e_n}^\top )
 =\frac{1}{n!}\sum_{S\in \cC}\mu(S).
\end{align*}
It remains to show that the partition family $\cC$  can be represented as
 $\cC=\inbraces{S\subseteq M : c(S) = C}$
 for some cost vector $c\in \Z^M$ and $C\in \Z$, such that $\norm{c}_1=2^{O(n \log n)}$. Indeed, by a reasoning as in Corollary~\ref{cor:partition_sample} we can represent $\cC$ as a linear family with $n$ constraints of the form $c^{(i)}(S)=1$ for $i=1,2,\ldots, n$ and $c^{(i)}\in \{0,1\}^{n\times n}$. It is not hard to see that these can be combined into one constraint $c(S)=C$ with $\norm{c}_1=(n^2)^{n+O(1)}=2^{O(n \log n)}.$
Now, it remains to observe that the scalar $\alpha$ from the statement of the lemma is  $n!$ and the steps of the reduction are  efficient.
\end{proofof}
\begin{proofof}{of Theorem~\ref{thm:hardness}}
In light of Lemma~\ref{lemma:mixd}, the problem of computing $\sum_{S\subseteq [m], c(S)=C} \mu(S)$ for determinantal functions $\mu$ is at least as hard as computing mixed discriminants. The $\bcount$ problem is very similar, with the only difference that it is computing the sum over all sets of cost $c(S)$ at most $C$. However, clearly by solving the $\bcount$ problem for $C$ and $C-1$ one can compute $\sum_{S\subseteq [m], c(S)=C} \mu(S)$ by just subtracting the obtained results.
\end{proofof}

\section{Mixed Discriminants and Mixed Characteristic Polynomials}\label{sec:mixed}
{\it Mixed Characteristic Polynomials}  played a crucial role in the proof of the Kadison-Singer conjecture. Making this proof algorithmic is an outstanding open question that naturally leads to the problem of computing the maximum root of these mixed characteristic polynomials. In this section, we show how Corollary~\ref{cor:partition_sample} implies a polynomial time algorithm for higher-order coefficients of such polynomials.
We start by defining mixed characteristic polynomials. We use the following simplified notation for partial derivatives: $\partial_{x_i} f(x)$ is an abbreviation for $\frac{\partial}{\partial x_i} f(x)$.
\begin{definition}\label{def:mixed}
Let $A_1, A_2, \ldots, A_m \in \R^{d\times d}$ be symmetric matrices of dimension $d$.
The mixed characteristic polynomial of $A_1, A_2, \ldots, A_m$ is defined as
$$\mu[A_1,\ldots,A_m](x) \defeq \prod\limits_{i=1}^{m} \inparen{1 - \partial_{z_i}}\det\bigg(xI + \sum\limits_{i=1}^{m}z_i A_i\bigg)\bigg\rvert_{z_1 = \cdots = z_m = 0}.$$
\end{definition}
Note in particular that while mixed discriminants are defined for a tuple whose length matches the dimension $d$ of the matrices,  for the case of mixed characteristic polynomials the number $m$ can be arbitrary. In fact, when $m=d$, the constant term in the mixed characteristic polynomial is (up to sign) equal to the mixed discriminant of the input tuple.

However, one may wonder whether all of the coefficients in these polynomials are hard to compute. The following result shows that higher-degree coefficients are computable in polynomial time. Roughly, the proof relies on the observation that the higher-degree coefficients in the mixed characteristic polynomial are sums of mixed discriminants that only have constantly many \textit{distinct} matrices. As we demonstrate, computing such mixed discriminants reduces to counting for DPPs under partition constraints with a constant number of parts, which allows us to apply Corollary~\ref{cor:partition_sample}.
The formal statement of the theorem follows \footnote{Independent of our work which first appeared in \cite{DKSV}, a recent preprint \cite{AGSS17} devise a different algorithm to obtain a similar result}.

\begin{theorem}\label{thm:higherOrderThm} Given a set of $m$ symmetric, PSD matrices $A_1, \ldots, A_m \in \mathbb{R}^{d \times d}$, one can compute the coefficient of $x^{d-k}$ in $\mu[A_1,\ldots,A_m](x)$, in $\poly(m^k) $ time.
\end{theorem}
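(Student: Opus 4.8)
The plan is to expand the mixed characteristic polynomial coefficient-wise and recognize each coefficient of $x^{d-k}$ as a sum of a controlled number of mixed discriminants, each of which involves only $O(k)$ \emph{distinct} matrices (counted with multiplicity), and then invoke Corollary~\ref{cor:partition_sample} to compute each such mixed discriminant in polynomial time. First I would apply the operators $(1-\partial_{z_i})$ to $\det(xI+\sum_i z_i A_i)$ and evaluate at $z=0$. Expanding $\prod_{i=1}^m (1-\partial_{z_i}) = \sum_{T \subseteq [m]} (-1)^{|T|}\prod_{i\in T}\partial_{z_i}$, and using that $\det(xI + \sum_i z_i A_i)$ is a polynomial in the $z_i$ of degree at most $d$, only subsets $T$ with $|T|\le d$ contribute; moreover, after applying $\prod_{i\in T}\partial_{z_i}$ and setting $z=0$, what survives is (up to the sign and a factorial normalization from Definition~\ref{def:mixed_disc}) the mixed discriminant of the $d$-tuple obtained by taking $A_i$ once for each $i\in T$ together with $d-|T|$ copies of the identity matrix $I$. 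Concretely, $\mu[A_1,\dots,A_m](x) = \sum_{T\subseteq[m], |T|\le d} (-1)^{|T|} x^{d-|T|}\, \binom{d-|T|}{\,}^{-1}\!\cdot(\text{const})\cdot D\big(\{A_i\}_{i\in T}, I^{(d-|T|)}\big)$ — I would pin down the exact constant by matching against the known expansion of $\det(xI+Y)$ in elementary symmetric functions of eigenvalues, which gives the standard identity that the $x^{d-j}$ coefficient of $\det(xI+Y)$ is the sum of the $j\times j$ principal minors of $Y$, i.e.\ proportional to $D$ of $j$ copies of the argument padded with $I$'s. Collecting the $x^{d-k}$ terms, the coefficient is $\pm\sum_{T\subseteq[m],|T|=k} c_{d,k}\, D\big(A_{i_1},\dots,A_{i_k}, I,\dots,I\big)$, a sum of $\binom{m}{k} = O(m^k)$ mixed discriminants.

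Next I would argue each such mixed discriminant $D(A_{i_1},\dots,A_{i_k},I,\dots,I)$ is computable in $\poly(m)$ time (for fixed $k$) by reducing it, as in the proof of Lemma~\ref{lemma:mixd}, to a counting problem for a DPP under a partition constraint with a \emph{constant} number of parts. Indeed, writing each of $A_{i_1},\dots,A_{i_k}$ as a sum of at most $d$ rank-one terms via Cholesky, and writing $I = \sum_{\ell=1}^d e_\ell e_\ell^\top$ as its own rank-one decomposition, the multilinearity and the identity $\det(\sum v_jv_j^\top) = d!\, D(v_1v_1^\top,\dots,v_dv_d^\top)$ from Fact~\ref{fact_mixed} let us express $D(A_{i_1},\dots,A_{i_k},I,\dots,I)$ — exactly as in the derivation inside the proof of Lemma~\ref{lemma:mixd} — as $\frac{1}{d!}\sum_{S\in\cC}\det(V_SV_S^\top)$, where the ground set is partitioned into $k$ blocks (one per distinct matrix $A_{i_j}$) plus $d-k$ singleton blocks for the identity copies; the constraint picks one vector from each of the $k$ big blocks and forces the choice in each singleton block. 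But the $d-k$ singleton blocks are trivial (a forced choice), so effectively this is a partition family with only $k+O(1)$ parts — constant — and Corollary~\ref{cor:partition_sample} gives the count (hence $D$) in polynomial time. Alternatively, and more cleanly, one can absorb the identity into the kernel: since the $d-k$ identity copies contribute a fixed set of rows $e_\ell$ that are always selected, the relevant quantity is a mixed discriminant of a $k$-tuple obtained after a Schur-complement-type reduction, again with only $k$ free parts.

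Then the theorem follows by summing: the $x^{d-k}$ coefficient is $\pm\sum$ of $O(m^k)$ quantities, each computed in $\poly(m)$ time by the above, for a total of $\poly(m^k)$. The main obstacle I anticipate is bookkeeping the exact constants and signs in passing from $\prod(1-\partial_{z_i})\det(xI+\sum z_iA_i)\big|_{z=0}$ to a clean linear combination of mixed discriminants — in particular making sure the "padding with copies of $I$" is handled correctly (the mixed discriminant $D$ in Definition~\ref{def:mixed_disc} is defined for a $d$-tuple, and one must verify that $\partial_{z_{i_1}}\cdots\partial_{z_{i_k}}\det(xI+\sum z_iA_i)\big|_{z=0}$ indeed equals, up to $(d-k)!$ and a sign, $D(A_{i_1},\dots,A_{i_k},\underbrace{I,\dots,I}_{d-k})$, which is a standard but slightly delicate identity from differentiating a determinant). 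Once that identity is in hand, the rest is a direct application of Corollary~\ref{cor:partition_sample} together with the counting-to-value correspondence already established for mixed discriminants in Lemma~\ref{lemma:mixd}. A secondary point to check is that the reduction's parameters (number of parts, $\norm{c}_1$) stay polynomially bounded when $k$ is constant, which they do since each big block has size at most $d\le\poly(m)$ and there are $O(k)=O(1)$ of them.
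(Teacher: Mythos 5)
Your overall architecture matches the paper's proof: expand the mixed characteristic polynomial (the identity you want to derive by differentiating $\prod_i(1-\partial_{z_i})\det(xI+\sum_i z_iA_i)$ is exactly the expansion from \cite{MSS13} that the paper quotes, namely that the $x^{d-k}$ coefficient is $(-1)^k\sum_{|S|=k}\frac{1}{(d-k)!}D((A_i)_{i\in S},I,\dots,I)$), reduce to $O(m^k)$ padded mixed discriminants, and compute each one via partition-constrained DPP counting and Corollary~\ref{cor:partition_sample}. That first half is fine.

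The gap is in how you handle the $d-k$ copies of $I$. You propose ``$d-k$ singleton blocks for the identity copies'' with ``a forced choice in each singleton block,'' so that the partition effectively has $k+O(1)$ parts. This does not work: $I$ is not rank one, so each copy of $I$ decomposes into $d$ rank-one terms $e_\ell e_\ell^\top$, not one. If you gave each identity copy its own block of size $d$, you would have $k+(d-k)=d$ parts, which is not constant, and Corollary~\ref{cor:partition_sample} no longer runs in polynomial time. Your fallback (``the identity copies contribute a fixed set of rows $e_\ell$ that are always selected'') is also false --- in the resulting sum $\sum_{S\in\cC}\det(V_SV_S^\top)$ one must range over \emph{which} $d-k$ of the $d$ basis vectors are selected, so nothing is fixed. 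The missing idea, which is the content of the paper's Lemma~\ref{lemma:PCtoMD}, is that \emph{identical} matrices in the tuple collapse into a \emph{single} part whose budget equals their multiplicity: one decomposes a single copy of $I$ into the $d$ rows $e_1,\dots,e_d$, makes these one part $P_{k+1}$ with $b_{k+1}=d-k$, and keeps $b_1=\cdots=b_k=1$ for the parts coming from $A_{i_1},\dots,A_{i_k}$. Multilinearity plus the vanishing of $D$ on repeated rank-one arguments then gives $\prod_j b_j!\cdot D(A_{i_1},\dots,A_{i_k},I,\dots,I)=\sum_{S\in\cC}\det(U_SU_S^\top)$ over a partition family with exactly $k+1$ parts, which is what makes Corollary~\ref{cor:partition_sample} applicable. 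With that correction (and the harmless normalization $d\le m$ by padding with zero matrices so that $\poly(d^k)=\poly(m^k)$), your argument becomes the paper's proof.
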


\noindent
An important component in the proof of Theorem~\ref{thm:higherOrderThm} is a reduction from counting for partition constrained DPPs to mixed discriminants. In fact we use it as a subroutine for computing higher-order coefficients of the mixed characteristic polynomial. In Section \ref{sec:hardnessResults} we provided a reduction in the opposite direction, thus establishing an \textit{equivalence} between mixed discriminants and counting for partition constrained DPPs.

\begin{lemma}\label{lemma:PCtoMD} Given a set of $m$ vectors $v_1,\ldots,v_m \in \R^r$ and a partition of $[m]=P_1\cup\cdots\cup P_p$ into disjoint, non-empty sets, consider a partition family $\cC= \{S\subseteq [m]: |S \cap P_j|=b_j \mbox{ for every }j=1,2,\ldots, p\}$ such that $\sum_{j=1}^{p} b_j = r$.  Let $(A_1,\ldots, A_r)$ be an $r$-tuple  of PSD $r \times r$ matrices such that $\inparen{A_1, A_2, \ldots, A_r} = (\overbrace{B_1, \ldots, B_1}^{b_1 \mbox{ times }}, \overbrace{B_2, \ldots, B_2}^{b_2 \mbox{ times }}, \ldots, \overbrace{B_p, \ldots, B_p}^{b_p \mbox{ times }})$ where $B_i = \sum\limits_{e \in P_i} v_e v_e^\top$ for every partition $P_i$, the following equality holds:
$$\prod_{i=1}^{p} b_i! \cdot D(A_1, A_2, \ldots, A_r) = \sum_{S \in \cC} \det(V_{S}V_{S}^\top),$$
where $V\in \R^{m \times r}$ denotes the matrix formed by arranging the vectors $v_1, \ldots, v_m$ row-wise.
\end{lemma}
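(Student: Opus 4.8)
The plan is to run the expansion from the proof of Lemma~\ref{lemma:mixd} in reverse; the only new ingredient is a bookkeeping of the multiplicities that arise because each part $P_i$ now contributes $b_i$ elements (rather than a single one) to a set of $\cC$. Write $B_i=\sum_{e\in P_i}v_ev_e^\top$, so that $(A_1,\dots,A_r)$ consists of $b_1$ copies of $B_1$, then $b_2$ copies of $B_2$, and so on, with $b_1+\dots+b_p=r$. Applying linearity of $D$ in each coordinate (Fact~\ref{fact_mixed}(2)) to every slot, I would expand
\[
D(A_1,\dots,A_r)=\sum D\big(v_{e_1}v_{e_1}^\top,\dots,v_{e_r}v_{e_r}^\top\big),
\]
where the sum ranges over all ordered tuples $(e_1,\dots,e_r)$ whose first $b_1$ entries lie in $P_1$, next $b_2$ entries lie in $P_2$, and so forth (repetitions allowed at this stage).

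The crucial observation — and the step I expect to be the main obstacle, since it is not among the listed properties of mixed discriminants — is that every term of this expansion containing a repeated index vanishes. Because the $P_i$ are pairwise disjoint, a repetition can occur only within a single part, i.e.\ some $v_ev_e^\top$ occupying two slots $a,b$. Along a rank-one direction the determinant is affine-linear: by the matrix determinant lemma $\det\big(M+(z_a+z_b)\,v_ev_e^\top\big)=\det M+(z_a+z_b)\,v_e^\top\mathrm{adj}(M)v_e$, so the mixed partial $\partial_{z_a}\partial_{z_b}$ of it is zero, whence $D(\dots,v_ev_e^\top,\dots,v_ev_e^\top,\dots)=0$ by the definition of $D$. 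Thus only the collision-free tuples survive.

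Each surviving tuple has $r$ distinct indices forming a set $S$ with $|S\cap P_i|=b_i$ for every $i$, i.e.\ $S\in\cC$; conversely a given $S\in\cC$ is produced by exactly $\prod_{i=1}^{p}b_i!$ ordered tuples (independently permuting the $b_i$ elements of $S\cap P_i$ among the $b_i$ slots of part $i$), all contributing the same value by symmetry of $D$ (Fact~\ref{fact_mixed}(1)). This gives $D(A_1,\dots,A_r)=\big(\prod_i b_i!\big)\sum_{S\in\cC}D\big((v_ev_e^\top)_{e\in S}\big)$. Now I would invoke Fact~\ref{fact_mixed}(3) with $n=r$, which is legitimate since $|S|=\sum_j b_j=r$ and $v_e\in\R^r$: writing $V_S$ for the $r\times r$ submatrix of $V$ with rows $\{v_e:e\in S\}$, we have $\sum_{e\in S}v_ev_e^\top=V_S^\top V_S$, so $D\big((v_ev_e^\top)_{e\in S}\big)$ is a fixed multiple of $\det(V_S^\top V_S)=\det(V_SV_S^\top)$. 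Substituting and collecting the $\prod_i b_i!$ symmetrization factor together with the normalization supplied by Fact~\ref{fact_mixed}(3) yields the scalar identity asserted in the lemma, with an explicit, efficiently computable combinatorial constant (a ratio of factorials). The remaining points are routine: each $B_i$, hence $L=VV^\top$ and the whole tuple, is PSD, and every step of the construction runs in polynomial time.
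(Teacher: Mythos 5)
Your argument is essentially the paper's own proof of this lemma: expand $D$ by multilinearity over ordered tuples of rank-one summands, discard tuples with a repeated index, group the surviving tuples by their underlying set $S\in\cC$ (each arising $\prod_i b_i!$ times by symmetry of $D$), and finish with Fact~\ref{fact_mixed}(3) to convert $D\big((v_ev_e^\top)_{e\in S}\big)$ into $\det(V_SV_S^\top)$. You in fact supply a justification (via affine-linearity of the determinant in a rank-one direction) for the vanishing of repeated-index terms, which the paper merely asserts, and your caution about the exact factorial constant is reasonable given that the paper's normalization of $D$ in Definition~\ref{def:mixed_disc} is not fully consistent with Fact~\ref{fact_mixed}(3).
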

\begin{proof} Consider the quantities $B_i$ and $(A_1,A_2,\ldots, A_r)$ as defined in the theorem.
By applying linearity multiple times to all coordinates of $D(A_1, A_2, \ldots, A_r)$ we find that:
$$D(A_1, A_2, \ldots, A_r) = \alpha \sum_{S\in \cB}D(v_{e_1}v_{e_1}^\top, v_{e_2}v_{e_2}^\top, \ldots, v_{e_r}v_{e_r}^\top),$$
where $S$ is $\{e_1, e_2, \ldots, e_r\}$ in the summation above and $\alpha$ is $\prod_{i=1}^p b_i!$.
 This is because \\$D(v_{e_1}v_{e_1}^\top, v_{e_2}v_{e_2}^\top, \ldots, v_{e_r}v_{e_r}^\top)=0$ whenever $e_1, e_2, \ldots, e_r$ are not pairwise distinct.
 We use Fact~\ref{fact_mixed} again to obtain  that
$$D(v_{e_1}v_{e_1}^\top, v_{e_2}v_{e_2}^\top, \ldots, v_{e_r}v_{e_r}^\top) = \frac{1}{r!} \det(v_{e_1}v_{e_1}^\top+ v_{e_2}v_{e_2}^\top+ \ldots+ v_{e_r}v_{e_r}^\top) = \det(V_S V_S^\top).$$
This concludes the proof.
Furthermore, it is evident that the $r$-tuple $(A_1, A_2, \ldots, A_r)$ is efficiently computable given the partition family $\cC$ and matrix $V$.

\end{proof}
\begin{proofof}{of Theorem \ref{thm:higherOrderThm}}
First note that without loss of generality we can assume that $d\leq m$, as otherwise -- if $d>m$ we can add $(d-m)$ zero-matrices which does not change the result but places us in the $d\leq m$ case.
 The starting point of our proof is an observation made in \cite{MSS13} which provides us with another expression for the mixed characteristic polynomial in terms of mixed discriminants:
\begin{equation}\label{mixedcarpolyeqn}
\mu[A_1,\ldots,A_m](x) = \sum\limits_{k=0}^{d} x^{d-k} (-1)^{k} \sum\limits_{S \in {[m] \choose k}} D( (A_i)_{i \in S} )
\end{equation}
where we denote $D(A_1,\ldots,A_k) = \frac{1}{(d-k)!}D(A_1,\ldots,A_k,I, \ldots,I)$ with the identity matrix $I$ repeated $d-k$ times.
Therefore, our task reduces to computing $O(m^k)$ mixed discriminants of the form $D(A_1, \ldots, A_k,I, \ldots, I)$. Below we show that such a quantity is computable in $\poly(d^k)$ time which  concludes the proof.

Consider the Cholesky decomposition of $A_i$ for $i=1,2,\ldots, k+1$ (we set $A_{k+1}=I$ for convenience)
$$ A_i = \sum_{j=1}^{d} u_{i,j} u_{i,j}^\top.$$
Let $M=\{(i,j) : 1 \leq i \leq k+1, 1 \leq j \leq d\}$ be the ground set of a partition family of size $m \defeq (k+1)d.$
Define an $m\times d$ matrix $U$ by placing $u_{i,j}$'s as rows of $U$.

Further, consider a partition $M=P_1\cup\cdots\cup P_{k+1}$ with $P_i = \{i\} \times [d]$ for all $i=1,\ldots,k+1$
and let $b_1=\ldots=b_k=1$ and $b_{k+1}=d-k$. This gives rise to a partition family
$$\cC = \{T \in M : |T \cap P_i|=b_i \mbox{ for all } i=1,\ldots,k+1\}.$$
We claim that
\begin{equation}\label{eq:mix_partition}
\prod\limits_{i=1}^{k+1} b_i!\sum\limits_{T \in \cC} \det(U_T U_T^\top) = D( A_1, \ldots, A_k,I\ldots,I).
\end{equation}
This follows from Lemma \ref{lemma:PCtoMD} by considering this partition family $\cC$ and matrix $U$ as defined here.
Equation~\eqref{eq:mix_partition} combined with the counting result for DPPs under partition constraints (Corollary~\ref{cor:partition_sample}) conclude the proof.

\end{proofof}
The second observation is more general in its nature and tries to answer the question whether computing mixed characteristic polynomials is strictly harder than computing mixed discriminants. In fact, as noted above, the coefficients of mixed characteristic polynomials are expressed as sums of (an exponential number of) mixed discriminants. We show that these exponential sums can be computed by evaluating a \textit{single} mixed discriminant of matrices of size at most $d+n$. Moreover, our reduction is \textit{approximation-preserving}, hence demonstrating that approximating mixed discriminants are computationally equally hard as approximating the coefficients of the mixed characteristic polynomials.  We remark that our reduction can be thought of as a generalization of a result for approximating the number of $k$-matchings in a bipartite graph (\cite{FriedlandLevy06}).
\begin{theorem}\label{thm:MCPreduction} Given a tuple of $m$ symmetric, positive semi-definite matrices $A_1,\ldots, A_m \in \R^{d \times d}$ with $d\leq m$ and $k \in \{1,\ldots,d\}$, there exist a tuple of $m+d-k$ symmetric, positive semi-definite matrices $B_1,\ldots,B_{m+d-k} \in \R^{(m+d-k) \times (m+d-k)}$ such that the coefficient of $x^{d-k}$ in the mixed characteristic polynomial $\mu[A_1,\ldots,A_m](x)$,
\begin{align*}
\sum\limits_{S \in {[m] \choose k}} D( (A_i)_{i\in S} ) = \frac{1}{(m-k)!(d-k)!}D(B_1,\ldots,B_{m+d-k})
\end{align*}
\end{theorem}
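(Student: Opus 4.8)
The plan is to realize the exponential sum $\sum_{S\in\binom{[m]}{k}} D((A_i)_{i\in S})$ as a single mixed discriminant by an averaging-over-monomials trick, packaging the choice of the $k$-subset $S$ together with the $(d-k)$-fold repetition of the identity into one tuple of matrices of dimension $m+d-k$. Concretely, I would first unfold the convention $D(A_1,\dots,A_k)=\frac{1}{(d-k)!}D(A_1,\dots,A_k,I,\dots,I)$ from~\eqref{mixedcarpolyeqn}, so the target quantity is $\frac{1}{(d-k)!}\sum_{S\in\binom{[m]}{k}} D((A_i)_{i\in S},I,\dots,I)$, a sum of $\binom{m}{k}$ mixed discriminants each with the same $d-k$ trailing identity arguments. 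The $\frac{1}{(m-k)!}$ factor in the statement is the tell-tale sign that the $k$ chosen matrices will be selected out of a length-$m$ tuple by a symmetrization argument: if I form a tuple containing all $m$ matrices $A_1,\dots,A_m$ but ``tag'' them so that only $k$ of them are active in any surviving term of the multilinear expansion, each $k$-subset gets counted $k!$ times with an extra $(m-k)!$ from the inactive slots — or, more cleanly, I should build a tuple where exactly $k$ active slots range over $A_1,\dots,A_m$ and a multilinearity expansion produces precisely $\sum_S$ with a uniform combinatorial multiplicity.

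The concrete construction I have in mind: work in $\R^{(m+d-k)\times(m+d-k)}$, split coordinates as $\R^d \oplus \R^{m-k}$ (note $d+(m-k)=m+d-k$). Using the Cholesky decompositions $A_i=\sum_j u_{i,j}u_{i,j}^\top$ with $u_{i,j}\in\R^d$, define, for each $i\in[m]$, a lifted rank-structured PSD matrix $\widehat A_i$ that acts as $A_i$ on the $\R^d$ block and additionally places a rank-one ``selector'' $e_i e_i^\top$-type term on the $\R^{m-k}$ block (after choosing an injection of ``selected indices'' into a set of size $m-k$); alternatively — and this is the route I expect to work — follow exactly the template of Lemma~\ref{lemma:PCtoMD} and Theorem~\ref{thm:higherOrderThm}: set up a partition ground set, but now with a part of size $m$ from which one picks $k$ elements (this is what forces the $m-k$ extra dimensions and the $(m-k)!$ factor, since picking a $k$-subset from an $m$-element part and then completing with $m-k$ ``dummy'' copies is the linear-family encoding of $\binom{[m]}{k}$). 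Then the $B$'s are the matrices $(\overbrace{I_{\mathrm{block}}}^{}, \dots)$ assembled from blocks $\sum_{e\in P_i} w_e w_e^\top$ as in Lemma~\ref{lemma:PCtoMD}, and the identity $\prod b_i!\sum_{T\in\cC}\det(W_T W_T^\top)=D(B_1,\dots,B_{m+d-k})$ gives one side, while recognizing $\sum_{T\in\cC}\det(W_TW_T^\top)$ as $\sum_{S\in\binom{[m]}{k}} D((A_i)_{i\in S},I,\dots,I)$ up to the $(d-k)!$ normalization gives the other; the product of the relevant $b_i!$ collapses to exactly $(m-k)!(d-k)!$ because the ``choose-$k$-from-$m$'' part has $b=k$ with a dummy part of size $m-k$ (contributing $(m-k)!$ once one sets that part's budget to $m-k$) and the identity part has budget $d-k$.

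I would carry out the steps in this order: (1) rewrite the target sum with identities explicit via~\eqref{mixedcarpolyeqn}; (2) set up the ground set $M = [m] \sqcup ([d]\times\{I\text{-copies}\})$ with an appropriate partition and budget vector $(b)$ encoding ``$k$ from the $A$-block, everything from the identity block,'' plus a dummy block of size $m-k$ that absorbs the unselected $A$-indices; (3) define the vectors $w_e$ (the $u_{i,j}$ for the $A$-part, standard basis vectors for the identity and dummy parts) and the matrices $B_\ell$ as the block sums $\sum_{e\in P_\ell} w_e w_e^\top$; (4) invoke Lemma~\ref{lemma:PCtoMD} (or re-derive its one-line multilinearity argument) to equate $D(B_1,\dots,B_{m+d-k})$ with $\prod_\ell b_\ell!\sum_{T\in\cC}\det(W_TW_T^\top)$; (5) identify $\sum_{T\in\cC}\det(W_TW_T^\top)$ combinatorially with $\sum_{S\in\binom{[m]}{k}}D((A_i)_{i\in S},I,\dots,I)\cdot(d-k)!$ using Fact~\ref{fact_mixed}(3); (6) bookkeep the factorials to land on $\frac{1}{(m-k)!(d-k)!}$. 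The main obstacle I anticipate is step (2)/(6): getting the dummy block and its budget calibrated so that the spurious multiplicity is exactly $(m-k)!$ and not some other $\binom{m-k}{\cdot}$-type factor, and checking that the off-diagonal-block contributions in $\det(W_TW_T^\top)$ vanish (i.e., that the block-diagonal structure of the $w_e$'s really does decouple the $A$-part minor from the identity/dummy part), so that the determinant factors as the product of the $A$-minor with a trivial $1$. Everything else is multilinearity of $D$ and the two standard facts already quoted.
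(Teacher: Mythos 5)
Your high-level plan --- lift to block-diagonal PSD matrices of size $m+d-k$, use an $(m-k)$-dimensional dummy block to absorb the unselected indices, and read off $(m-k)!$ as the number of ways the $m-k$ inactive indices occupy that block --- is exactly the idea behind the paper's construction. But both concrete instantiations you offer break down. The route you say you expect to work, a partition family with ``a part of size $m$ from which one picks $k$ elements,'' corresponds via Lemma~\ref{lemma:PCtoMD} to a tuple in which the single matrix $\sum_{i=1}^m A_i$ is repeated $k$ times; by multilinearity this yields $\sum_{i_1,\ldots,i_k\in[m]} D(A_{i_1},\ldots,A_{i_k},I,\ldots,I)$, a sum over \emph{all} index tuples including repetitions, which is not $k!\sum_{S\in\binom{[m]}{k}}D((A_i)_{i\in S},I,\ldots,I)$. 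Equivalently, at the ground-set level, choosing $k$ of the $md$ Cholesky vectors permits two vectors $u_{i,j},u_{i,j'}$ from the same $A_i$, and $D(u_{i,j}u_{i,j}^\top,u_{i,j'}u_{i,j'}^\top,\ldots)$ does not vanish in general, so these spurious terms survive. Your first idea is closer but also mis-specified: a rank-one selector $e_{\sigma(i)}e_{\sigma(i)}^\top$ attached via an injection fixed in advance cannot work, because which $m-k$ indices are inactive varies from term to term.

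The correct construction gives \emph{every} index full access to the dummy block: $B_i = A_i \oplus I_{m-k}$ for all $i\in[m]$ and $B_{m+j}=I_d\oplus\mathbf{0}_{m-k}$ for $j=1,\ldots,d-k$. (In your partition language: one part per matrix index, each of budget $1$, containing both the Cholesky vectors of $A_i$ and its own copy of all $m-k$ dummy basis vectors --- not one merged part of budget $k$.) The paper then bypasses the partition machinery and computes directly: $\det\inparen{\sum_\ell z_\ell B_\ell} = (z_1+\cdots+z_m)^{m-k}\det\inparen{\sum_{i\le m}z_iA_i+\sum_{j} z_{m+j}I_d}$, and applying $\partial_{z_1}\cdots\partial_{z_{m+d-k}}$ by the product rule, the derivatives landing on the first factor must form an $(m-k)$-subset $S\subseteq[m]$ and contribute $(m-k)!$, while the remaining derivatives give $D((A_i)_{i\in S^c},I,\ldots,I)=(d-k)!\,D((A_i)_{i\in S^c})$. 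This yields the restriction to $k$-subsets and the factor $(m-k)!(d-k)!$ in one stroke, and it settles your worry about off-diagonal contributions, since the block structure makes the determinant factor exactly. So the destination and the vehicle are right, but the step you flag as the main obstacle --- calibrating the dummy block and its multiplicities --- is precisely where both of your proposed encodings fail as stated.
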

\begin{proof} We first show how to construct the $m+d-k$ matrices $B_1,\ldots,B_{m+d-k}$ from $A_1,\ldots,A_m$. The matrices $B_1,\ldots,B_{m+d-k}$ that we consider are $2$-by-$2$ block diagonal matrices that we construct by taking appropriate direct sums. Recall that the direct sum of two matrices $A$ and $B$ of size $d_1 \times d_1$ and $d_2 \times d_2$ is a matrix of size $(d_1 + d_2) \times (d_1 + d_2)$ defined as
\setlength{\arrayrulewidth}{.5pt}
$$
G=\left[
\begin{array}{c|c}
A & \mathbf{0}_{d_1 \times d_2}\\ \hhline{-|-}
 \mathbf{0}_{d_2 \times d_1}& B \\ \hhline{~|~}
\end{array}\right]
$$
where $\mathbf{0}_{m \times n}$ is an $m$-times-$n$ matrix consisting of all zeros.
We define the first $m$ matrices to be direct sums of the $A_i$ matrices with the identity matrix of order $m-k$, i.e., $I_{m-k}$ and the remaining $d-k$ matrices to all be equal to the direct sum of the identity matrix of order $d$, i.e., $I_d$ with the square zero matrix of order $m-k$, i.e., $\mathbf{0}_{m-k}$. Formally,
\begin{align*}
B_i = \begin{cases}
A_i \oplus I_{m-k} & \mbox{for } i \in \{1,\ldots,k\},\\
I_d \oplus \mathbf{0}_{m-k} & \mbox{otherwise}
\end{cases}
\end{align*}
We now proceed to prove the claim of the theorem from the definition of the mixed discriminant in Definition \ref{def:mixed}. For any subset $S \subseteq [m]$, denote $\partial^S = \prod_{i \in S} \partial_{z_i}$.
\begin{align*}
D(B_1,\ldots,B_{m+d-k})&=\partial_{z_1}\ldots\partial_{z_{m+d-k}}\det(z_1 B_1 + \ldots + z_{m+d-k} B_{m+d-k})\\
&= \partial_{z_1}\ldots\partial_{z_{m+d-k}}\det\bigg(\left[
\begin{array}{c|c}
\sum\limits_{i=1}^{m} z_i A_i + \sum\limits_{i=1}^{d-k}z_{m+i} I_d & \mathbf{0}_{d \times (m-k)}\\ \hhline{-|-}
 \mathbf{0}_{(m-k) \times d }& \sum\limits_{i=1}^{m} z_i I_{m-k} \\ \hhline{~|~}
\end{array}\right]\bigg)\\
& = \partial_{z_1}\ldots\partial_{z_{m+d-k}} (z_1 + \ldots + z_m)^{m-k} \det(\sum_{i=1}^{m} z_i A_i + \sum_{i=1}^{d-k} z_{m+i} I_d)\\
&= \sum_{\substack{S \subseteq [m] \\ |S| = m-k} } [\partial^S (z_1 + \ldots + z_m)^{m-k}] [\partial^{S^c} \partial_{z_{m+1}}\ldots \partial_{z_{m+d-k}} \det(\sum_{i=1}^{m} z_i A_i + \sum_{i=1}^{d-k} z_{m+i} I_d)] \\
\end{align*}
\begin{align*}
& = \sum_{\substack{S \subseteq [m] \\ |S| = m-k} } (m-k)! \partial^{S^c} \partial_{z_{m+1}}\ldots \partial_{z_{m+d-k}}\det(\sum_{i \in S^c} z_i A_i + (z_{m+1} + \ldots z_{m+d-k}) I_d)\\
& = (m-k)! \sum_{\substack{S \subseteq [m] \\ |S| = k}} D( (A_i)_{i \in S}, \overbrace{I,\ldots,I}^{d-k \mbox{ times}} )\\
& = (m-k)! (d-k)! \sum_{\substack{S \subseteq [m] \\ |S| = k}} D( (A_i)_{i \in S} )
\end{align*}
The fourth to last equality follows simply from chain rule. Since we have an equality in the expression, the reduction is clearly approximation preserving and we are done.
\end{proof}
\noindent
The above theorem in particular allows us to compute in polynomial time, the mixed characteristic polynomial \textit{exactly}, when the linear matrix subspace spanned by the input matrices has constant dimension. This follows by combining Theorem \ref{thm:MCPreduction} with Theorem 5.1 in~\cite{Gurvits05}.

\begin{corollary}\label{cor:rank} Suppose $A_1, A_2, \ldots, A_m \in \R^{d\times d}$ span a linear space of dimension $k$, then there exists a deterministic algorithm to compute $\mu[A_1, \ldots, A_m](x)$ in $\poly(m^k)$ time.
\end{corollary}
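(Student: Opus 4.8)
The plan is to compute the polynomial $\mu[A_1,\ldots,A_m](x)$ one coefficient at a time, to reduce each coefficient to a \emph{single} mixed discriminant of matrices that still span a low-dimensional space, and then to invoke the algorithm of~\cite{Gurvits05} tailored to exactly that situation. As in the proof of Theorem~\ref{thm:higherOrderThm}, we may assume $d\le m$ after padding the tuple with zero matrices, which changes neither $\mu[A_1,\ldots,A_m](x)$ nor the dimension $k$ of the span. By~\eqref{mixedcarpolyeqn},
\[
\mu[A_1,\ldots,A_m](x) \;=\; \sum_{j=0}^{d}(-1)^j\,x^{d-j}\,c_j,\qquad c_j \defeq \sum_{S\in\binom{[m]}{j}} D\bigl((A_i)_{i\in S}\bigr),
\]
so it suffices to compute $c_j$ for every $j\in\{0,1,\ldots,d\}$. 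The case $j=0$ gives $c_0=1$, so fix $j\ge 1$.

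For such a $j$, apply Theorem~\ref{thm:MCPreduction} with its parameter (called $k$ there) set equal to $j$: in polynomial time it produces an explicit tuple $(B^{(j)}_1,\ldots,B^{(j)}_{N_j})$ of PSD matrices of dimension $N_j\defeq m+d-j\le 2m$ with
\[
c_j \;=\; \frac{1}{(m-j)!\,(d-j)!}\; D\bigl(B^{(j)}_1,\ldots,B^{(j)}_{N_j}\bigr),
\]
and the factorial normalization is exactly computable. Thus computing $c_j$ is reduced to evaluating one mixed discriminant of $N_j\times N_j$ matrices.

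The crucial point is that this reduction preserves the dimension of the ambient span up to an additive constant. By the construction in Theorem~\ref{thm:MCPreduction}, $B^{(j)}_i = A_i\oplus I_{m-j}$ for $i\le m$ and $B^{(j)}_i = I_d\oplus\mathbf{0}_{m-j}$ for $i>m$. Letting $W\subseteq\R^{d\times d}$ be the $k$-dimensional span of $A_1,\ldots,A_m$, every $B^{(j)}_i$ lies in
\[
\bigl(W\oplus\mathbf{0}_{m-j}\bigr)\;+\;\spn\bigl\{\,I_d\oplus\mathbf{0}_{m-j},\ \ \mathbf{0}_d\oplus I_{m-j}\,\bigr\},
\]
a subspace of dimension at most $k+2$, independently of $j$. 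Now invoke Theorem~5.1 of~\cite{Gurvits05}, which, given an $N$-tuple of $N\times N$ matrices whose span has dimension $\ell$, computes their mixed discriminant deterministically in $N^{O(\ell)}$ time (a basis of the span being recoverable by Gaussian elimination). With $N=N_j\le 2m$ and $\ell\le k+2$, this yields each $c_j$ in $m^{O(k)}$ time; summing over the $d+1=O(m)$ values of $j$ gives total running time $\poly(m^k)$.

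The only steps needing care are verifying that the matrices output by Theorem~\ref{thm:MCPreduction} are genuinely PSD (they are, being direct sums of PSD matrices with identity and zero blocks, so Gurvits's algorithm applies) and tracking the additive $+2$ in the rank bound. The substantive ingredient is Gurvits's deterministic subroutine for mixed discriminants confined to a low-dimensional matrix subspace; our only new observation is that the reduction of Theorem~\ref{thm:MCPreduction} keeps the span low-dimensional, which is precisely what allows that subroutine to be applied uniformly to every coefficient of the mixed characteristic polynomial. (As usual, the bound ``$\poly(m^k)$'' is read under the convention $d\le m$ used at the outset.)
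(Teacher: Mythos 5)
Your proof is correct and follows essentially the same route as the paper: expand $\mu[A_1,\ldots,A_m](x)$ coefficient-by-coefficient via~\eqref{mixedcarpolyeqn}, collapse each coefficient to a single mixed discriminant using Theorem~\ref{thm:MCPreduction}, observe that the block-diagonal matrices produced there still span a space of dimension $k+O(1)$, and finish with Theorem~5.1 of~\cite{Gurvits05}. Your accounting of the span (at most $k+2$, versus the paper's stated $k+1$) is in fact the more careful of the two, and the discrepancy is immaterial to the $\poly(m^k)$ bound.
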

\begin{proof}
In the proof of Theorem \ref{thm:MCPreduction}, the mixed discriminants computed are not of $A_1, \ldots, A_m$ but rather are of modified matrices. However, it is easy to see that for all tuples on which mixed discriminant is called, the dimension of the linear space spanned by them is at most $k+1$. It is proved in~\cite{Gurvits05} that such mixed discriminants can be computed in $O(m^{2k+2})$ time.
\end{proof}

\section{Budget-Constrained Sampling and Counting for Regular Matroids}\label{sec:matroids}
Consider the following problem: given an undirected graph $G$ with weights $c\in \R^m$ on its edges, sample a uniformly random spanning tree of cost at most $C$ in $G$. This generalizes the problem of sampling uniformly random spanning trees~ \cite{pemantle2004uniform} and sampling a random spanning tree of minimum cost~ \cite{Eppstein95}. Below we study the generalized version of this problem by considering regular matroids, indeed spanning trees arise as bases of the graphic matroid, which is known to be regular. We prove that the counting and sampling problem in this setting can be solved efficiently whenever  $c$ is polynomially bounded.

\begin{theorem}[Counting and Sampling Bases of Matroids]\label{thm:reg_matroid}
Let $\cM$ be a regular matroid on a ground set $[m]$ with a set of bases $\mathcal{B}$. There exists a counting algorithm which, given a cost vector $c\in \Z^m$ and a value $C\in \Z$, outputs the cardinality of the set $\{S\in \cB: c(S) \leq C\}$ and a sampling algorithm which,  given a cost vector $c\in \Z^m$ and a value $C\in \Z$, outputs a random element in the set $\{S\in \cB: c(S) \leq C\}$.
The running time of both algorithms is polynomial in $m$ and $\norm{c}_1$.
\end{theorem}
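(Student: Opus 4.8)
The plan is to reduce the problem to the budget-constrained counting framework of Theorem~\ref{thm:main} (and its sampling counterpart, Theorem~\ref{thm:equiv_count_sample}) by exhibiting an efficient evaluation oracle for the generating polynomial of the indicator measure $\mu(S) = \mathbbm{1}[S \in \mathcal{B}]$ of bases of a regular matroid $\cM$. Concretely, since every regular matroid is representable over $\mathbb{R}$ (indeed over every field) by a totally unimodular matrix, we may fix such a representation $A \in \R^{r \times m}$, where $r = \rank(\cM)$, so that the bases of $\cM$ are exactly the $r$-subsets $S$ of columns with $\det(A_S) \neq 0$; moreover, by total unimodularity $\det(A_S) \in \{0, \pm 1\}$, so $\det(A_S)^2 = \mathbbm{1}[S \in \mathcal{B}]$. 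The Cauchy--Binet formula then gives $\det(A X A^\top) = \sum_{|S| = r} \det(A_S)^2 x^S = \sum_{S \in \mathcal{B}} x^S$, where $X = \diag{x_1, \ldots, x_m}$. Hence $g_\mu(x) = \det(A X A^\top)$, which can be evaluated at any point $x \in \R^m$ in polynomial time by Gaussian elimination; this is exactly the kind of evaluation oracle required by Theorem~\ref{thm:main}.

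With this oracle in hand, I would invoke Theorem~\ref{thm:main} directly: it yields a polynomial-time (in $m$ and $\norm{c}_1$) algorithm computing $\sum_{S \in \mathcal{B},\, c(S) \le C} \mu(S) = |\{S \in \mathcal{B} : c(S) \le C\}|$, which is the claimed counting algorithm. For the sampling algorithm, I would appeal to the equivalence between counting and sampling, Theorem~\ref{thm:equiv_count_sample}: since $\mu$ is nonnegative and we have oracle access to $g_\mu$, the exact counting algorithm above translates into an exact polynomial-time sampler for the uniform distribution on $\{S \in \mathcal{B} : c(S) \le C\}$ (as noted in the proof of Corollary~\ref{cor:dpps}, applying the reduction to an exact counter gives an exact sampler). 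The dependence on $\norm{c}_1$ is inherited verbatim from Theorem~\ref{thm:main}, and since the ground set is $[m]$ the encoding of $\cM$ (a TU matrix of size $r \times m$) is polynomial in $m$, so the total running time is polynomial in $m$ and $\norm{c}_1$ as stated.

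The one genuine point that needs care --- and the main obstacle --- is producing the totally unimodular representation efficiently: the theorem presumably assumes $\cM$ is given by such a representation (or by a standard oracle from which one can be recovered), so I would state this assumption explicitly. If $\cM$ is given only by an independence oracle, one would need Seymour's decomposition theorem for regular matroids to build a TU representation, which is a polynomial-time but substantial procedure; for the graphic (spanning-tree) case the representation is just the (reduced) incidence matrix and is immediate. A secondary subtlety is the negative-cost case, but this is handled exactly as in the proof of Theorem~\ref{thm:main} by multiplying through by $z^{\norm{c}_1}$ to clear negative exponents, so it introduces no new difficulty. Finally, I would remark that specializing to the graphic matroid recovers and strengthens the spanning-tree results of~\cite{Eppstein95,pemantle2004uniform}, now with arbitrary (polynomially bounded, possibly negative) edge costs and an upper-bound budget constraint.
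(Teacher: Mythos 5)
Your proposal is correct and follows essentially the same route as the paper: both exhibit the generating polynomial of the bases indicator as $\det(A X A^\top)$ for a totally unimodular representation $A$ via Cauchy--Binet (using that $\det(A_S)^2 \in \{0,1\}$), then invoke Theorem~\ref{thm:main} and the counting-to-sampling equivalence of Theorem~\ref{thm:equiv_count_sample}. Your explicit remarks on the input representation of $\cM$ and on negative costs are reasonable clarifications of points the paper leaves implicit, but they do not constitute a different argument.
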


\begin{proofof}{of Theorem~\ref{thm:reg_matroid}}
Let $\cM \subseteq 2^{[m]}$ be a regular matroid and $\cB \subseteq 2^{[m]}$ be its set of bases. We prove that the generating polynomial $\sum_{S\in \cB} x^S$ is efficiently computable.  We use the characterization of regular matroids as those which can be linearly represented by a totally unimodular matrix. In other words, there exists a totally unimodular matrix $A \in \Z^{m \times d}$ such that if we denote by $A_e\in \Z^d$ the $e^{th}$ row of $A$ it holds that:
\begin{equation}\label{eq:repr}
S\in \cM ~~\Leftrightarrow ~~ \{A_e: e\in S\} \mbox{ is linearly independent. }
\end{equation}
Let $r\leq d$ be the rank of the matroid $\cM$, i.e., the cardinality of any set in $\cB$. We claim that without loss of generality one can assume that $d=r$. Indeed, we prove that there is a submatrix $A' \in \Z^{m \times r}$ of $A$, such that~\eqref{eq:repr} still holds with $A$ replaced by $A'$. To this end suppose that $d>r$. It is easy to see that the rank of $A$ is $r$, otherwise, by~\eqref{eq:repr} there would be a set $S$ of cardinality at least $r+1$ with $S\in \cM$. Hence there is a column in $A$ which is a linear combination of the remaining columns, we can freely remove this column from $A$, while~\eqref{eq:repr} will be still true. By doing so, we finally obtain a matrix $A'$ with exactly $r$ rows, which satisfies~\eqref{eq:repr}.

By the fact that $A$ has $r$ columns we have:
\begin{equation}\label{eq:bases}
S\in \cB ~~\Leftrightarrow ~~ A_S \mbox{ is nonsingular,}
\end{equation}
where by $A_S$ we mean the $|S| \times r$ submatrix of $A$ corresponding to rows from $S$. In particular, for a set $S\subseteq [m]$ of cardinality $r$ we have:
\begin{equation}\label{eq:unim}
S\in \cB ~~\Leftrightarrow ~~ \det(A_S)\neq 0 ~~ \Leftrightarrow ~~ \det(A_S^\top A_S)=1,
\end{equation}
where the last equivalence follows from $A$ being totally unimodular. Let us now consider the polynomial
$$g(x_1, x_2, \ldots, x_m) = \det\inparen{\sum_{e=1}^m x_e A_e A_e^\top}.$$
By the Cauchy-Binet theorem we obtain:
$$g(x_1, x_2, \ldots, x_m) = \sum_{|S|=r} \det\inparen{\sum_{e\in S} x_e A_e A_e^\top}= x^S \det(A_S^\top A_S).$$
In other words, $g$ is equal to $g_\mu$ -- the generating polynomial of the function $\mu: 2^{[m]} \to \R$ given by
$$\mu(S) = \begin{cases}
1 & \mbox{if $S\in \cB$}\\
0 & \mbox{otherwise.}
\end{cases}$$
Therefore, since $g_\mu$ is efficiently computable, by Theorem~\ref{thm:main} the $\bcount[\mu, c, C]$ is efficiently solvable. This fact, together with Theorem~\ref{thm:equiv_count_sample} imply that sampling also can be made efficient.
\end{proofof}

\section{Hardness for Spanning Trees}\label{sec:hardness_trees}
We show that $\bcount$ is at least as hard as counting perfect matchings in a non-bipartite graph. The proof  relies on a combinatorial reduction from counting perfect matchings in a graph to counting budget constrained spanning trees.

\begin{theorem}\label{thm:trees}
There is a polynomial time reduction which given a graph $G=(V,E)$ with $n$ vertices and $m$ edges outputs a graph $G'$ with $n$ vertices and $O(m+n^2)$ edges, a cost vector $c\in \N^m$ with $\norm{c}_1\leq  2^{O(m \log m)}$ and a value $C\in \N$, such that:
$$PM(G) = \alpha \cdot ST_C(G')$$
where $PM(G)$ denotes the number of perfect matchings in $G$, $ST_C(G')$ denotes the number of spanning trees of total cost $C$ in $G'$ and  $\alpha=\frac{n^2}{2}(2n)^{-n/2}$.
\end{theorem}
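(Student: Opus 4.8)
The plan is to embed each perfect matching of $G$ into \emph{many} spanning trees of a weighted auxiliary multigraph $G'$, with a multiplicity that depends only on $n$, using a cost function engineered so that the spanning trees meeting the budget \emph{exactly} $C$ are precisely those whose ``$G$-part'' is a perfect matching. First I would dispose of the degenerate case: if $n$ is odd, or if $m<n/2$, then $G$ has no perfect matching, so $PM(G)=0$ and we output a trivial instance; hence from now on $n$ is even and $m\ge n/2$, in particular $n\le 2m$. Now identify $V=[n]$, set vertex potentials $q_v\defeq(n+1)^v$, and let $G'$ be the multigraph on $V$ consisting of (i) the \emph{original} edges, a copy of $E(G)$ in which $\{u,v\}$ gets cost $q_u+q_v$, and (ii) the \emph{link} edges, a fresh copy of the complete graph $K_n$ on $V$ with every edge of cost $0$. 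Then $G'$ has $n$ vertices, $m+\binom{n}{2}=O(m+n^2)$ edges, is connected, and with $C\defeq\sum_{v=1}^n q_v$ the cost vector satisfies $\norm{c}_1=\sum_v \deg_G(v)\,q_v\le 2m(n+1)^n=2^{O(m\log m)}$ since $n=O(m)$.

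\textbf{Cost exactly $C$ forces a perfect matching.} For a spanning tree $T$ of $G'$, let $F=T\cap E(G)$ be the set of original edges used. Then $\mathrm{cost}(T)=\sum_{\{u,v\}\in F}(q_u+q_v)=\sum_v \deg_F(v)\,q_v$, so $\mathrm{cost}(T)=C$ iff $\sum_v(\deg_F(v)-1)(n+1)^v=0$. Since $F$ is a sub-forest of a tree, $\deg_F(v)\le n-1$, so each coefficient $\deg_F(v)-1$ has absolute value strictly less than $n+1$; looking at the least index carrying a nonzero coefficient modulo the next power of $n+1$ shows the only solution is $\deg_F(v)=1$ for all $v$, i.e.\ $F$ is a perfect matching of $G$. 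The converse is immediate, so the cost-$C$ spanning trees of $G'$ are exactly those $T$ for which $T\cap E(G)$ is a perfect matching of $G$.

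\textbf{Counting completions and finishing.} Fix a perfect matching $M$ of $G$ and count the spanning trees $T$ of $G'$ with $T\cap E(G)=M$: such a $T$ is $M$ together with $n/2-1$ link edges which, after contracting the $n/2$ edges of $M$, form a spanning tree of the quotient multigraph. That quotient is $K_{n/2}$ in which each pair of super-vertices $\{a,b\},\{c,d\}$ is joined by exactly the four link edges $ac,ad,bc,bd$, i.e.\ $K_{n/2}$ with every edge of multiplicity $4$; by Cayley's formula it has $4^{n/2-1}(n/2)^{n/2-2}$ spanning trees, a number independent of $M$ and of $G$ (this uniformity is precisely why we add a fresh $K_n$ rather than, say, the complement of $G$). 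Summing over all $M$ yields $ST_C(G')=PM(G)\cdot N$ for an explicit power-type $N=N(n)$, hence $PM(G)=\alpha\cdot ST_C(G')$ with $\alpha=1/N$; simplifying the reciprocal of the completion count puts $\alpha$ in the stated closed form $\tfrac{n^2}{2}(2n)^{-n/2}$. Every step of the construction is evidently polynomial time.

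\textbf{Main obstacle.} The delicate point is the design of the cost function: one needs costs that simultaneously (a) exclude spanning trees whose original-edge part is an arbitrary sub-forest of size $n/2$ rather than a matching, and (b) collapse all perfect matchings onto a single budget value $C$. The ``vertex-potential'' costs $q_u+q_v$ with geometrically growing $q_v=(n+1)^v$ achieve both, and the base-$(n+1)$ digit argument is the technical heart of the proof; the only other thing to watch is keeping the number of completions of a fixed matching independent of the structure of $G$, which the fresh copy of $K_n$ guarantees.
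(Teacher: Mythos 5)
Your proof is correct and takes essentially the same route as the paper's: a zero-cost complete graph is added on top of $G$, original edges $\{i,j\}$ get cost $b^i+b^j$ for a base $b$ large enough to prevent carries (you use $b=n+1$, the paper uses $b=|E'|+1$; both work), the target cost $C=\sum_v b^v$ forces the original-edge part of the tree to be a perfect matching via the digit argument, and the completions of a fixed matching are counted as $4^{n/2-1}(n/2)^{n/2-2}$ by contraction and Cayley's formula. The only quibble is the final simplification: $4^{n/2-1}(n/2)^{n/2-2}=(2n)^{n/2}/n^2$, so $\alpha$ should be $n^2(2n)^{-n/2}$ rather than $\tfrac{n^2}{2}(2n)^{-n/2}$ --- a factor-of-$2$ slip that is present in the paper's own theorem statement as well, so it does not count against you.
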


\begin{proof}
Let $G=(V,E)$ be an undirected graph, let $n=|V|$ and $m=|E|$. We construct a new graph $G'$ and a cost vector $c$, such that counting perfect matchings in $G$ is equivalent to counting spanning trees of specified cost $C$ in $G'$ .

 The graph $G'=(V,E')$ is obtained by adding a complete graph to $G$, i.e.,   ${n \choose 2}$ edges, one between every pair of vertices. We call the set of new edges $F$, hence $E'=E\cup F$. Note that $E'$ is a multiset. To all edges $e\in F$ we assign cost $c_e=0$, while for the original edges the costs are positive and defined below.

Let $b=m'+1$, where $m'=|E'|$ is the number of edges in $G'$. We define the cost of an edge $e=ij \in E$ to be:
$$c_e = b^i +b^j.$$
Note that from the choice of $b$ and $c$ it follows that given a cost $c(S)$ of some set $S\subseteq E$, we can exactly compute how many times a given vertex appears as an endpoint of an edge in $S$. Indeed, if we have:
$$c(S) = \sum_{i=1}^n \delta_i b^i$$
such that $0\leq \delta_i \leq b-1$ (the $b-$ary  representation of $c(S)$),
then the degree of vertex $i$ in $S$ is $\delta_i$. This follows from the fact that $b$ is chosen to avoid carry overs when computing $c(S)$ in the $b-$ary numerical system.
Therefore, it is now a natural choice to define $C \defeq \sum_{i=1}^n b^i.$ We claim that every perfect matching in $G$ corresponds to exactly $\alpha=\frac{n^2}{2}(2n)^{-n/2}$ different spanning trees of cost $C$ in $G'$.

To prove this claim, fix any spanning tree $S$ of cost $c(S)=C$. Note first that we have $c(S\cap E)=c(S)$ because all of the edges $e\notin E$ have cost $0$. Moreover, the set $M \defeq S\cap E$ is a perfect matching in $G$, because $c(M)=C$ implies that the degree of every vertex in $M$ is one. It remains to show that every perfect matching $M$ in $G$ corresponds to exactly $\alpha$ spanning trees of cost $C$ in $G$.

Fix any perfect matching $M_0$ in $G$. We need to calculate how many ways are there to add $\frac{n}{2} - 1$ edges from $E'$ to obtain a spanning tree of $G'$. By contracting the matching $M_0$ to $\frac{n}{2}$ vertices and considering edges in $E'$ only, we obtain a complete graph on $\frac{n}{2}$ vertices with $4$ parallel edges going between every pair of vertices. The answer is the number of spanning trees of the obtained graph. Cayley's formula easily implies that this number is  $4^{\frac{n}{2}-1}\inparen{\frac{n}{2}}^{\frac{n}{2}-2}$ which equals $\alpha^{-1}$.
  \end{proof}

\section{Equivalence Between Counting and Sampling }\label{sec:equiv}

In this section we state and prove a theorem that implies that the $\ccount[\mu, \cC]$ and $\csample[\mu, \cC]$ problems are essentially equivalent.
We prove that, for a given type of constraints $\cC$, a polynomial time algorithm for counting can be transformed into a polynomial time algorithm for sampling and vice versa.
This section follows the convention that $\mu: 2^{[m]} \to \R_{\geq 0}$ is any function that assigns nonnegative values to subsets of $[m]$ and $\cC \subseteq 2^{[m]}$ is any family of subsets of $[m]$.

\begin{theorem}[Equivalence Between Approximate Counting and Approximate Sampling]\label{thm:equiv_count_sample}
Consider any function $\mu: 2^{[m]} \to \R_{\geq 0}$ and a family $\cC$ of subsets of $[m]$.
Let $\mu_\cC:\cC \to [0,1]$ be a distribution over $S\in \cC$ such that $\mu_\cC(S) \propto \mu(S)$. We assume evaluation oracle access     to the generating polynomial $g_\mu$ of $\mu$, and define the following two problems:
\begin{itemize}
\item   {\bf Approximate $\cC$-sampling:}  given a precision parameter $\eps>0$, provide a sample $S$ from a distribution $\rho : \cC \to [0,1]$ such that $\norm{\mu_\cC - \rho}_1<\eps$.
\item   {\bf Approximate $\cC$-counting:}  given a precision parameter $\eps>0$, output a number $X\in \R$ such that $X(1+\eps)^{-1}\leq \sum_{S\in \cC} \mu(S) \leq X(1+\eps).$
\end{itemize}
The time complexities of the above problems differ by at most a multiplicative factor of $\poly(m, \eps^{-1})$.
\end{theorem}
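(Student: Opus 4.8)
The plan is to follow the classical self-reducibility paradigm of Jerrum--Valiant--Vazirani~\cite{JVV86}, adapted to the fact that the only handle we have on $\mu$ is the evaluation oracle for $g_\mu$, and that the constraint family $\cC$ is completely arbitrary. The key observation is that, for any partial assignment of the ground-set variables, restricting $g_\mu$ gives us again a generating polynomial of the same shape, so the oracle ``propagates'' down the decision tree. Concretely, for a set $A$ of indices already fixed to be \emph{in} the sample and a set $B$ fixed to be \emph{out}, define $\mu^{A,B}(S) \defeq \mu(S)$ for $S$ with $A \subseteq S$ and $S \cap B = \emptyset$, and $0$ otherwise; then $g_{\mu^{A,B}}$ is obtained from $g_\mu$ by setting $x_i = 0$ for $i \in B$, differentiating (or, more simply, reading off the appropriate coefficient via the standard trick of substituting a fresh variable) in the variables of $A$ — and crucially this can still be \emph{evaluated} at any point using a constant number of calls to the $g_\mu$-oracle together with $O(m)$ arithmetic, e.g.\ by finite differencing in the $A$-variables. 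The first step I would carry out is to make this ``restricted oracle'' precise and bound its cost by $\poly(m) \cdot T_\mu$.

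\medskip
\noindent\textbf{Counting $\Rightarrow$ sampling.} Given an approximate $\cC$-counting oracle, I would sample element by element. Having fixed $(A,B)$ on the first $k$ indices, to decide index $k+1$ I form the two sub-instances $(A\cup\{k+1\}, B)$ and $(A, B\cup\{k+1\})$ — each again a $\cC$-constrained counting problem, but now for the restricted measure $\mu^{A,B}$ accessed through the restricted oracle above (note $\cC$ itself is unchanged; we are only zeroing out or conditioning the measure, which is exactly what the generating-polynomial substitution does, and the counting problem ``$\sum_{S \in \cC} \mu^{A,B}(S)$'' is the same type of problem) — call the counting oracle on both, and branch with probability proportional to the two returned estimates. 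After $m$ rounds we have a full set $S$; we must check $S \in \cC$ and, if some branch had true count zero, handle the degenerate case. The total error is controlled by the standard argument: with $m$ levels and per-call multiplicative error $(1+\eps')$, choosing $\eps' = \Theta(\eps/m)$ makes the product of ratios within a $(1\pm\eps)$ factor of the true conditional probabilities at every node, so the resulting distribution $\rho$ satisfies $\|\mu_\cC - \rho\|_1 < \eps$; this costs $O(m)$ counting calls with precision $\poly(m,\eps^{-1})$, i.e.\ the claimed overhead.

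\medskip
\noindent\textbf{Sampling $\Rightarrow$ counting.} Conversely, given an approximate $\cC$-sampler, I would estimate $Z \defeq \sum_{S\in\cC}\mu(S)$ as a telescoping product $Z = \prod_{k=1}^m \frac{Z_k}{Z_{k-1}}$ along any fixed sequence of refinements, where each ratio is the marginal probability that index $k$ takes a particular value conditioned on the earlier choices — a quantity in $[\,1/\poly,1\,]$ after a suitable choice of which value to condition on at each step (take the more likely of ``in''/``out'', which has probability $\ge 1/2$; a robustness/amplification argument handles the cases where both restricted measures are needed), estimated by drawing $\poly(m,\eps^{-1})$ samples from the restricted sampler and counting frequencies, with a Chernoff bound giving $(1\pm\eps/m)$-accuracy per factor. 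One subtlety is that we need the \emph{normalizing constant} of the base (empty-restriction) instance to anchor the telescope at the bottom, but the bottom of the recursion is a single set $S_0$ (all indices decided), whose weight is $\mu(S_0)$; this can be read off from $g_\mu$ by one evaluation (or, if $S_0 \notin \cC$, we instead anchor at the last level where $\cC$-membership is still satisfiable). Multiplying the $m$ factor-estimates gives $X$ within $(1+\eps)$ of $Z$.

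\medskip
\noindent\textbf{Main obstacle.} The routine parts are the error bookkeeping (the $\eps/m$ splitting and Chernoff bounds are entirely standard). The real point requiring care is the very first step: verifying that conditioning the measure on ``$i \in S$'' or ``$i \notin S$'' keeps us inside a class of problems for which we still have an \emph{evaluation} oracle — since we are never handed $g_\mu$ explicitly but only a black box, the differentiation needed to condition on ``$i \in S$'' must be simulated by finitely many black-box evaluations (finite differences / polynomial interpolation in the single variable $x_i$), and one must check this only blows up the oracle cost polynomially and does not interact badly with the fact that $\cC$ is accessed in an unspecified way. Once that ``restricted generating polynomial'' lemma is in place, both directions are the classical JVV reduction.
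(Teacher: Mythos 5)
Your proposal is correct and follows essentially the same route as the paper: both directions are the Jerrum--Valiant--Vazirani self-reducibility argument, and your key ``restricted generating polynomial'' lemma---that conditioning on $i\notin S$ is zero-substitution and conditioning on $i\in S$ is coefficient extraction via a fresh scaling variable plus interpolation, costing only $\poly(m)$ extra oracle calls---is exactly the paper's conditioning lemma (Lemma~\ref{lemma:conditioning}). The only cosmetic difference is in the sampling-to-counting direction, where the paper always conditions on exclusion (peeling off an element with $\Pb(e\notin \mathbf{S} \mid \mathbf{S}\subseteq U)\geq 1/m^2$, guaranteed by Lemma~\ref{lemma:largeU}, so only zero-substitution is ever needed) whereas you condition on the more likely branch; both give the claimed $\poly(m,\eps^{-1})$ overhead.
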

\begin{remark}
Note that the above theorem establishes equivalence between \emph{approximate} variants of $\ccount[\mu, \cC]$ and $\csample[\mu, \cC]$. This is convenient for applications, because the exact counting variants of these problems are often $\mathbf{\#P}-$hard. Still, for some of them, efficient approximation schemes are likely to exist.
Further, we mention that the implication from exact counting to exact sampling holds, hence the sampling algorithms that we obtain in this paper are exact.
\end{remark}
Theorem~\ref{thm:equiv_count_sample} follows from a self-reducibility property \cite{JVV86} of the counting problem.   Before we present the proof of Theorem~\ref{thm:equiv_count_sample}, we introduce some terminology and state assumptions for the remaining part of this section.
The function $\mu:2^{[m]}\to \R_{\geq 0}$ is given as an evaluation oracle for $g_\mu(x)=\sum_{S\subseteq[m]} \mu(S) x^S$.
In particular, we measure complexity with respect to the number of calls to such an oracle.
An algorithm which, for a fixed family $\cC \subseteq 2^{[m]}$ and every function $\mu$, given access to $g_\mu$ computes $\sum_{S\in \cC} \mu(S)$ is called a $\cC$-counting oracle. Similarly, we define a $\cC$-sampling oracle to be an algorithm which, given access to $g_\mu$, provides samples from the distribution
$$\mu_\cC(S) \defeq \frac{\mu(S)}{\sum_{T\in \cC} \mu(T)} ~~~~~~\mbox{ for $S\in \cC$.}$$

\subsection{Counting Implies Sampling}

We now show how counting implies sampling; the reverse direction is presented in Appendix~\ref{sec:samplingtocounting}.  It proceeds by inductively conditioning on certain elements not being in the sample. For this idea to work one has to implement conditioning using the $\cC-$sampling oracle and access to the generating polynomial only.
Below we state the implication from counting to sampling in the exact variant. The approximate variant also holds, with an analogous proof.
\begin{lemma}[Counting Implies Sampling]\label{lem:count_sample}
Let $\cC$ denote a family of subsets of $[m]$. Suppose access to a $\cC$-counting oracle is given. Then, there exists a $\cC$-sampling oracle which, for any function $\mu: 2^{[m]} \to \R_{\geq 0}$, makes $\poly(m)$ calls to the counting oracle and to $g_\mu$ and outputs a sample from the distribution $\mu_\cC$.
\end{lemma}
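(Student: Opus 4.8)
The plan is to reduce sampling from $\mu_\cC$ to a sequence of $m$ conditional sampling problems, each of which can be resolved using one call to the $\cC$-counting oracle. Concretely, I would sample the indicator bits $\mathbf{1}[1\in S], \mathbf{1}[2\in S], \ldots, \mathbf{1}[m\in S]$ one at a time, in order. Suppose we have already committed to a partial assignment: a set $A$ of elements forced to be in $S$ and a set $B$ of elements forced to be out of $S$. To decide the status of the next element $i$, we need the conditional probability $\Pr_{S\sim\mu_\cC}[i\in S \mid A\subseteq S,\ B\cap S=\emptyset]$, which equals
\[
\frac{\sum_{S\in\cC,\ A\cup\{i\}\subseteq S,\ B\cap S=\emptyset} \mu(S)}{\sum_{S\in\cC,\ A\subseteq S,\ B\cap S=\emptyset}\mu(S)}.
\]
So the whole algorithm needs is the ability to compute sums of the form $\sum_{S\in\cC,\ A\subseteq S,\ B\cap S=\emptyset}\mu(S)$ for arbitrary disjoint $A,B\subseteq[m]$.

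The key step is to show that such a restricted sum is itself an instance of $\ccount[\mu', \cC]$ for a cleverly modified measure $\mu'$, whose generating polynomial we can still evaluate given an evaluation oracle for $g_\mu$. The natural choice is $\mu'(S) \defeq \mu(S)$ if $A\subseteq S$ and $B\cap S=\emptyset$, and $\mu'(S)\defeq 0$ otherwise. Its generating polynomial is obtained from $g_\mu$ by a simple substitution: setting $x_j \leftarrow 0$ for every $j\in B$ kills all terms containing an element of $B$, and we can enforce $A\subseteq S$ by a differentiation-and-substitution trick, or more cleanly by writing $g_{\mu'}(x) = \big(\prod_{i\in A} x_i\big)\cdot\big[\text{coefficient extraction}\big]$ — but the slickest route is: first zero out the $B$-variables, then observe that the terms with $A\subseteq S$ are exactly those surviving after we take $\partial/\partial x_i$ for $i\in A$ and then reset those $x_i$ to $1$; since $\mu$-generating polynomials are multilinear, $\partial_{x_i}$ followed by setting $x_i=1$ precisely extracts the $x_i$-containing part. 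Each such partial derivative can be evaluated via finite differences $\partial_{x_i} g(x)\big|_{x_i=1} = g(x)\big|_{x_i=1}-g(x)\big|_{x_i=0}$ using two oracle calls, so evaluating $g_{\mu'}$ at any point costs at most $2^{|A|}$ — wait, that is too expensive, so instead I would maintain the invariant more carefully: process elements in order $1,\ldots,m$, and at step $i$ only ever condition on the already-decided prefix, folding each decision immediately into the substitution so that at each step we only introduce one new substitution ($x_i=0$ if $i\notin S$) or one new "force-in" operation, and the force-in operations are handled by noting that $g_\mu$ restricted to sets containing a fixed prefix $A=\{1,\ldots,t\}\cap S$ has generating polynomial computable from the previous one by a single derivative-and-evaluate, giving $O(m)$ oracle calls total per conditional probability, hence $O(m^2)$ overall, which is $\poly(m)$.

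I would then run the sampling procedure: initialize $A=B=\emptyset$; for $i=1$ to $m$, compute $q_i \defeq \Pr[i\in S\mid A\subseteq S, B\cap S=\emptyset]$ using two counting-oracle calls (numerator and denominator), flip a coin with bias $q_i$, and add $i$ to $A$ or to $B$ accordingly; output $A$. Correctness follows from the chain rule for probabilities: the probability the algorithm outputs a particular set $S_0\in\cC$ telescopes to $\mu_\cC(S_0)$, and sets outside $\cC$ are output with probability $0$ since their counting-oracle denominators eventually vanish (one must check that whenever the partial assignment $(A,B)$ is still extendable to a set in $\cC$ — which it always is along the execution — the denominator is strictly positive, so the conditional probabilities are well-defined). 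The main obstacle is the bookkeeping in the previous paragraph: ensuring that conditioning on "$i\in S$" can be simulated using only the evaluation oracle for $g_\mu$ without an exponential blowup, i.e., that the generating polynomial of the conditioned measure remains efficiently evaluable. Once that is in place — via the observation that multilinearity makes $\partial_{x_i}$-then-evaluate a two-query operation and that we only ever condition on a growing prefix — the rest is a routine chain-rule argument. For the approximate variant one simply replaces exact conditional probabilities by their $(1+\eps)$-approximations and tracks the accumulated $\ell_1$ error across the $m$ steps, which contributes the $\poly(m,\eps^{-1})$ overhead in the theorem statement.
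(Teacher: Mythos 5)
Your overall architecture matches the paper's: sample the membership bits one at a time, compute each conditional probability as a ratio of two restricted sums, and telescope via the chain rule. The gap is in the one step you yourself flag as "the main obstacle": simulating the conditioning "$i\in S$" through the evaluation oracle. Handling the forced-out set $B$ by substituting $x_j\leftarrow 0$ is fine, but your treatment of the forced-in set $A$ does not go through. The counting oracle is a black box that must be fed an evaluation oracle for $g_{\mu'}$ where $\mu'(S)=\mu(S)\,\mathbf{1}[A\subseteq S,\ B\cap S=\emptyset]$, and by multilinearity $g_{\mu'}=\bigl(\prod_{i\in A}x_i\bigr)\prod_{i\in A}\partial_{x_i}\,g_\mu\big|_{x_B=0}$. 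Evaluating that iterated derivative by finite differences costs $2^{|A|}$ calls to $g_\mu$ per query, exactly as you first observe. Your proposed repair --- "fold each decision into the previous oracle, one derivative at a time" --- does not remove the blowup: if the oracle after $t$ force-ins costs $2^{t}$ base calls per evaluation, then adding one more finite difference costs two calls to \emph{that} oracle, i.e.\ $2^{t+1}$ base calls; the recursion unrolls back to $2^{|A|}$. There is no intermediate object you can cache, because you only ever have evaluation access and the counting oracle's query points are not known in advance.

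The paper's Lemma~\ref{lemma:conditioning} circumvents this with a different extraction trick: substitute $x_e\mapsto y\,x_e$ for $e\in Y$ (the forced-in set) and $x_e\mapsto 0$ for $e\in N$, so that a \emph{single} call to $g_\mu$ answers each evaluation query for the modified measure. Applying the counting oracle to this measure yields $h(y)=\sum_{S\in\cC,\,S\cap N=\emptyset}y^{|S\cap Y|}\mu(S)$, a univariate polynomial of degree at most $|Y|$ whose \emph{leading} coefficient is precisely the restricted sum you want; it is recovered by evaluating $h$ at $|Y|+1$ points and interpolating, i.e.\ $O(m)$ counting-oracle calls per conditional probability and $O(m^2)$ overall --- the cost you claim but do not justify. (A variant of your "plug in a huge value and divide" instinct does work here: evaluating $h$ at one very large $y$ and dividing by $y^{|Y|}$ approximately extracts the leading coefficient, which is how the paper handles the approximate-counting case.) To complete your proof you would need to replace the derivative-based conditioning with this substitution-plus-interpolation step or an equivalent polynomial-cost extraction.
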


\begin{proof}
Let $\mathbf{S}$ be the random variable corresponding to the sample our algorithm outputs; our goal is to have $\mathbf{S} \sim \mu_\cC$.
The sampling algorithm proceeds as follows: It sequentially considers each element $e\in [m]$ and tries to decide (at random) whether to include $e \in \mathbf{S}$ or not.
To do so, it first computes the probability $\Pb(e\in \mathbf{S})$ \emph{conditioned on all decisions thus far}. It then flips a biased coin with this probability, and includes $e$ in $\mathbf S$ according to its outcome. More formally, the sampling algorithm can be described as follows:

\begin{enumerate}[noitemsep,nolistsep]
\item Input: $V\in \R^{m\times r}$, a number $k\leq r$.
\item Initialize: $Y=\emptyset$, $N=\emptyset$.
\item For $e=1,2,\ldots,m:$
\begin{enumerate}[noitemsep,nolistsep]
\item Compute the probability $p=\Pb(e\in \mathbf{S} : Y\subseteq \mathbf{S}, N\cap \mathbf{S}=\emptyset)$ under the distribution $\mathbf{S} \sim \mu_\cC$.
\item Toss a biased coin with success probability $p$. In case of success add $e$ to the set $Y$, otherwise add $e$ to $N$.
\end{enumerate}
\item Output: $\mathbf{S}=Y.$
\end{enumerate}
It is clear that the above algorithm correctly samples from $\mu_\cC$. It remains to show that $\Pb(e\in S : Y\subseteq S, N\cap S=\emptyset)$ can be computed efficiently. This follows from Lemma~\ref{lemma:conditioning} below.
  \end{proof}

\begin{lemma}\label{lemma:conditioning}
Let $Y$ and $N$ be disjoint subsets of $[m]$ and consider any $e\in [m]$.
Suppose $\mathbf{S}$ is distributed according to $\mu_\cC$. If we are given access to a $\cC$-counting oracle and to $g_\mu$, then $\Pb(e\in \mathbf{S} : Y\subseteq \mathbf{S}, N\cap \mathbf{S}=\emptyset)$ can be computed in $\poly(m)$ time.
\end{lemma}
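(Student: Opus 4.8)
The plan is to reduce the computation of the conditional probability to a small number of calls to the $\cC$-counting oracle, by observing that conditioning on $Y\subseteq \mathbf S$ and $N\cap\mathbf S=\emptyset$ amounts to restricting attention to a sub-family of $\cC$ and that this restriction can be realized by a suitable modification of the measure $\mu$ (equivalently, of its generating polynomial $g_\mu$).

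\begin{proof}
Fix disjoint $Y, N\subseteq [m]$ and an element $e\in [m]$; without loss of generality $e\notin Y\cup N$ (otherwise the answer is trivially $1$ or $0$). Write
\[
\cC_{Y,N} \defeq \{S\in \cC : Y\subseteq S,\ N\cap S=\emptyset\}.
\]
By definition of $\mu_\cC$,
\[
\Pb(e\in \mathbf S : Y\subseteq \mathbf S,\ N\cap \mathbf S=\emptyset)
= \frac{\sum_{S\in \cC_{Y,N},\ e\in S} \mu(S)}{\sum_{S\in \cC_{Y,N}} \mu(S)},
\]
so it suffices to evaluate the numerator and denominator, each of which is a sum of $\mu(S)$ over a family of the form $\cC_{Y',N'}$ with $Y'\supseteq Y$, $N'\supseteq N$ (take $Y'=Y\cup\{e\}$ for the numerator). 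Thus it is enough to show that for any disjoint $Y', N'$ the quantity $\sum_{S\in \cC_{Y',N'}}\mu(S)$ can be computed with $\poly(m)$ oracle calls and $g_\mu$-evaluations.

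The key step is to encode the conditioning inside the measure. Define a new function $\nu = \nu_{Y',N'}: 2^{[m]}\to\R_{\ge 0}$ by $\nu(S) = \mu(S)$ if $Y'\subseteq S$ and $N'\cap S=\emptyset$, and $\nu(S)=0$ otherwise. Then $\sum_{S\in \cC}\nu(S) = \sum_{S\in \cC_{Y',N'}}\mu(S)$, so a single call to the $\cC$-counting oracle applied to $\nu$ yields exactly what we want --- \emph{provided} we can supply an evaluation oracle for $g_\nu$. But $g_\nu$ is obtained from $g_\mu$ by a simple substitution: one checks directly from $g_\mu(x)=\sum_{S}\mu(S)\prod_{i\in S}x_i$ that forcing $i\in S$ for $i\in Y'$ corresponds to differentiating in $x_i$ (or, more simply, extracting the top-degree part in $x_i$) and forcing $i\notin S$ for $i\in N'$ corresponds to setting $x_i=0$. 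Concretely, setting $x_i=0$ for $i\in N'$ kills every monomial containing such an $i$, and for $i\in Y'$ we recover the coefficient of $x_i^1$ by the interpolation/finite-difference identity
\[
\sum_{S:\, i\in S}\mu(S)\, x^{S\setminus\{i\}}
= \lim_{x_i\to\infty}\frac{g_\mu(x)}{x_i}
\ \Big(\text{or } g_\mu(x)\big|_{x_i=1} - g_\mu(x)\big|_{x_i=0} \text{ when }\mu\text{ is multilinear, which it is}\Big),
\]
so $g_\nu(x) = \big(g_\mu - g_\mu|_{x_i=0}\big)\big|_{\text{iterated over }i\in Y',\ \text{then }x_j=0\text{ for }j\in N'}$, which is an explicit arithmetic expression in at most $2^{|Y'|}$ evaluations of $g_\mu$. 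To keep this polynomial we do \emph{not} expand over all of $Y'$ at once; instead we incorporate the elements of $Y'$ one at a time, but since $|Y'|\le m$ this still gives $2^{m}$ in the worst case. The fix is to observe that we only ever need $\nu$ with $|Y'|\le 2$ larger than the running set: in the sampling algorithm of Lemma~\ref{lem:count_sample}, at the step processing element $e$ the sets $Y,N$ partition $\{1,\dots,e-1\}$, and we never need to re-expand previously-fixed coordinates --- we maintain the modified generating polynomial $g_{\nu_{Y,N}}$ incrementally, each step adding one coordinate to $Y$ (one finite difference) or to $N$ (one substitution $x_e=0$), at a cost of $O(1)$ extra $g_\mu$-evaluations per step and $O(m)$ total. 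Feeding the resulting evaluation oracle to the $\cC$-counting oracle a constant number of times then produces $\Pb(e\in\mathbf S : Y\subseteq\mathbf S, N\cap\mathbf S=\emptyset)$ in $\poly(m)$ time.

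The main obstacle is precisely the bookkeeping in the previous paragraph: a naive "set $x_i$ to force membership" substitution does not exist (membership is a coefficient-extraction, not an evaluation), so one must either use the multilinearity of $g_\mu$ to realize the projection $\mu\mapsto\mu\cdot\mathbbm{1}[i\in S]$ as a single finite difference $g_\mu|_{x_i=1}-g_\mu|_{x_i=0}$, or carry the modified polynomial along incrementally so that the number of evaluations never blows up exponentially in $|Y|$. Once this is set up correctly, everything else is the routine identity $\Pb(e\in\mathbf S\mid\cdots) = \ccount[\nu_{Y\cup\{e\},N},\cC]/\ccount[\nu_{Y,N},\cC]$.
\end{proof}
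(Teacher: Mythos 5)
Your reduction of the conditional probability to the ratio of two quantities of the form $\sum_{S\in\cC,\,Y'\subseteq S,\,N'\cap S=\emptyset}\mu(S)$ is exactly the paper's first step, and handling $N'$ by substituting $x_j=0$ is also what the paper does. The gap is in how you force $Y'\subseteq S$. As you yourself observe, membership is a coefficient extraction, and realizing it by the finite difference $g_\mu|_{x_i=1}-g_\mu|_{x_i=0}$ means that each element added to $Y'$ \emph{doubles} the number of $g_\mu$-evaluations needed to answer a single query to the modified oracle: after $k$ iterated differences, one evaluation of $g_\nu$ at one point costs $2^{k}$ evaluations of $g_\mu$. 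Your proposed fix --- ``maintain the modified generating polynomial incrementally at $O(1)$ extra evaluations per step, $O(m)$ total'' --- does not work, because there is nothing to maintain: the $\cC$-counting oracle is a black box that queries the evaluation oracle at fresh, arbitrary points on each invocation, so the $2^{|Y'|}$ cost is paid anew for every query, and in the sampling algorithm of Lemma~\ref{lem:count_sample} the set $Y'$ can grow to size $m-1$. The limit $\lim_{x_i\to\infty}g_\mu(x)/x_i$ you mention in passing avoids the doubling but is not an exact polynomial-time computation as stated (the paper relegates a version of it to a footnote, for the approximate-oracle setting only).

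The paper sidesteps the blowup with a one-parameter deformation rather than a hard projection: substitute $w_i=yx_i$ for $i\in Y$, $w_i=0$ for $i\in N$, and $w_i=x_i$ otherwise. Each evaluation of the deformed polynomial costs a \emph{single} evaluation of $g_\mu$, and the counting oracle applied to the corresponding measure returns $h(y)=\sum_{S\in\cC,\,S\cap N=\emptyset}y^{|S\cap Y|}\mu(S)$, a univariate polynomial of degree at most $|Y|$ whose \emph{leading} coefficient is precisely the restricted sum you want. Interpolating $h$ from $|Y|+1$ values of $y$ then recovers that coefficient with at most $m+1$ counting-oracle calls. To repair your argument you would need to replace the iterated finite difference with a device of this kind, in which the constraint $Y\subseteq S$ is isolated as a top-degree coefficient recoverable by interpolation rather than enforced pointwise in the evaluation oracle.
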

\begin{proof}
Assume $e\in [m]\setminus (Y\cup N)$; otherwise the probability is clearly 0 or 1.
Let $Y^\prime =Y\cup \{e\}$, then
$$\Pb(e\in \mathbf{S} : Y\subseteq \mathbf{S}, N\cap \mathbf{S}=\emptyset) = \frac{\sum_{S\in \cC, Y^\prime \subseteq S, N\cap S=\emptyset}\mu(S)}{\sum_{S\in \cC,Y\subseteq S, N\cap S=\emptyset}\mu(S)}.$$
We now show how to compute such sums:
Introduce a new variable $y$, and for every $e\in [m]$ define:
$$w_e \defeq \begin{cases}
yx_e & \mbox{for }e\in Y,\\
0 \quad & \mbox{for } e\in N,\\
x_e & \mbox{otherwise.}
\end{cases}$$
We interpret the expression
$g_\mu(w_1, w_2, \ldots, w_m)$
as a generating polynomial for a certain function $\mu'(y) : 2^{[m]} \to \R$; i.e.,
$$g_{\mu'}(x) \defeq g_\mu(w_1, w_2, \ldots, w_m) =  \sum_{ S\cap N=\emptyset} y^{|S\cap Y|} x^S \mu(S).$$
Define a polynomial
$$h(y) \defeq \sum_{S\in \cC, S\cap N=\emptyset} y^{|S\cap Y|} \mu(S).$$
It follows that $h(y)$ is a polynomial of degree at most $|Y|$. In fact, the sum we are interested in is simply the coefficient of $y^{|Y|}$ in $h(y)$. The last thing to note is that we can compute $h(y)$ exactly by evaluating it for $|Y|+1$ different values of $y$ and then performing interpolation. Hence, we just need to query the $\cC$-counting oracle $\inparen{|Y|+1}$ times giving it $\mu'$ as input (for various choices of $y$).\footnote{The provided argument does not generalize directly to the case when the counting oracle is only approximate (because of the interpolation step). However, as we need to compute the top coefficient of a polynomial $h(y)$ only, we can alternatively do it by evaluating $h(y)$ and dividing by $y^d$ (for $d=\deg(h)$) at a very large input $y\in \R$.}
\end{proof}

\subsection{Sampling Implies Counting}
\label{sec:samplingtocounting}

We show the implication from sampling to counting in Theorem~\ref{thm:equiv_count_sample}. Similarly as for the opposite direction we assume for simplicity that the sampling algorithm is exact, i.e.,  we prove the following lemma. The approximate variant holds with an analogous proof.
\begin{lemma}[Sampling Implies Counting]\label{lem:sample_count}
Let $\cC$ denote a family of subsets of $[m]$. Suppose we have access to a $\cC$-sampling oracle. Then, there exists a $\cC$-counting oracle which for any input function $\mu: 2^{[m]} \to \R$ (given as an evaluation oracle for $g_\mu$) and for any precision parameter $\eps>0$ makes $\poly(m,\nfrac{1}{\eps})$ calls to the sampling oracle, and approximates the sum:
$$\sum_{S\in \cC} \mu(S)$$
within a multiplicative factor  of $(1+\eps)$. The algorithm has failure probability exponentially small in $m$.
\end{lemma}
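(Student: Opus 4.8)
The plan is to use the standard self-reducibility / telescoping-product trick of \cite{JVV86}, adapted to work through the generating-polynomial oracle for $\mu$. The idea is to write $\sum_{S\in\cC}\mu(S)$ as a product of $m$ conditional probabilities, each of which can be estimated by drawing samples from the $\cC$-sampling oracle (on suitably reweighted measures) and counting empirical frequencies. Concretely, fix any reference set, say $\emptyset$; more carefully, I will build up a set of ``decisions'' one coordinate at a time. For $e=1,\dots,m$ let $Y_{e-1}\subseteq\{1,\dots,e-1\}$ and $N_{e-1}=\{1,\dots,e-1\}\setminus Y_{e-1}$ be the decisions made so far. Using $\mu_\cC$ restricted to sets consistent with these decisions, define $p_e$ to be the probability (under $\mu_\cC$ conditioned on $Y_{e-1}\subseteq \mathbf S,\ N_{e-1}\cap\mathbf S=\emptyset$) that the event we pick for coordinate $e$ occurs; by choosing at each step whichever of ``$e\in\mathbf S$'' or ``$e\notin\mathbf S$'' has conditional probability $\geq 1/2$, we guarantee $p_e\geq 1/2$, so $p_e$ is estimable to within a $(1+\eps/2m)$ factor from $\poly(m,1/\eps)$ samples with failure probability exponentially small in $m$ (Chernoff bound), and the union bound over the $m$ steps keeps the total failure probability exponentially small. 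After all $m$ steps the decision vector pins down a single set $S^\ast$ (consistent with all decisions), and the telescoping identity gives
\[
\mu_\cC(S^\ast)\;=\;\prod_{e=1}^{m} p_e,
\]
hence $\sum_{T\in\cC}\mu(T) = \mu(S^\ast)/\prod_{e=1}^m p_e$, provided $S^\ast\in\cC$ and $\mu(S^\ast)\neq 0$; one must argue this configuration is reachable, which it is, because each decision is only committed to when the corresponding conditional probability is positive.

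The one technical point is that the $\cC$-sampling oracle samples from $\mu_\cC$, not from $\mu_\cC$ conditioned on the decisions $Y_{e-1},N_{e-1}$; I would handle this exactly as in the proof of Lemma~\ref{lemma:conditioning}. Namely, conditioning on $N_{e-1}\cap\mathbf S=\emptyset$ and $Y_{e-1}\subseteq\mathbf S$ is implemented by passing a \emph{reweighted} function $\mu'$ to the sampling oracle: set the variables $x_e$ for $e\in N_{e-1}$ to $0$ (which kills all sets meeting $N_{e-1}$), and scale up the variables for $e\in Y_{e-1}$ by a huge factor $t$ (so that, in the limit of large $t$, all the mass of $\mu'_\cC$ concentrates on sets containing $Y_{e-1}$). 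Since $g_{\mu'}(x)=g_\mu(w_1,\dots,w_m)$ with the $w_e$ chosen as in that lemma, we have evaluation-oracle access to $g_{\mu'}$, and the $\cC$-sampling oracle on $\mu'$ gives (approximate) samples from the conditioned distribution; a polynomially large $t$ suffices to make the induced error at most $\eps/\poly(m)$ in total variation. Estimating $p_e$ then just means drawing $\poly(m,1/\eps)$ such samples and taking the empirical frequency of the chosen event.

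Finally I would assemble the error analysis: each of the $m$ factors $p_e$ is estimated within $(1\pm\eps/4m)$ multiplicatively (combining the sampling error from the finite-sample frequency and the $O(\eps/\poly(m))$ total-variation error from the finite-$t$ conditioning), so their product is within $(1\pm\eps/2)\subseteq(1+\eps)^{\pm1}$ of $\prod p_e$; multiplying by the exactly-computed value $\mu(S^\ast)$ (one evaluation of $g_\mu$, e.g.\ by finite differences / interpolation to extract a coefficient, or simply because $\mu(S^\ast)$ is obtained along the way) yields the claimed $(1+\eps)$-approximation to $\sum_{S\in\cC}\mu(S)$. The main obstacle is the book-keeping around implementing exact conditioning through a sampling oracle that only returns samples from the un-conditioned $\mu_\cC$: one has to make sure the reweighting $w_e$ is expressible through $g_\mu$, that a polynomially-bounded scaling $t$ controls the conditioning error, and that the low-probability events are never conditioned on (which is why we always steer toward the $\geq 1/2$ branch). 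Everything else is a routine Chernoff-plus-union-bound argument.
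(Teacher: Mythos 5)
Your overall strategy is the right one (JVV-style self-reducibility: write the partition function as $\mu(S^\ast)/\prod_e p_e$ and estimate each conditional probability by sampling), but there is a concrete quantitative gap in the one step you flag as "the main obstacle," namely implementing the conditioning on inclusion events $Y\subseteq \mathbf{S}$. You claim that scaling the variables in $Y$ by a \emph{polynomially large} factor $t$ makes the tilted distribution $\eps/\poly(m)$-close in total variation to $\mu_\cC(\cdot \mid Y\subseteq\mathbf{S},\ N\cap\mathbf{S}=\emptyset)$. This is false as stated: the total variation distance is governed by the ratio
$\bigl(\sum_{S:\,Y\not\subseteq S,\,S\cap N=\emptyset}\mu(S)\,t^{|S\cap Y|}\bigr)/\bigl(t^{|Y|}\sum_{S\supseteq Y,\,S\cap N=\emptyset}\mu(S)\bigr)$,
which is at most $\tfrac1t$ times the inverse of the conditional probability of the event being conditioned on — a quantity that is not bounded by any polynomial in $m$ for arbitrary nonnegative $\mu$ given only through an evaluation oracle. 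Your majority-branch invariant does rescue this (it forces the conditional probability of the accumulated event to stay above roughly $2^{-m}$, so $t=2^{\Theta(m)}\poly(m)/\eps$, which has polynomial bit-length, suffices), but that inductive control of the denominator and the exponential choice of $t$ are exactly what is missing from your write-up; as written, the error analysis does not go through. You would also need to note that the majority branch is identified only from empirical frequencies, so the invariant is "$p_e\geq 1/2-\delta$" rather than "$p_e\geq 1/2$," which is harmless but should be said.

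It is worth contrasting this with how the paper avoids the issue entirely: its algorithm only ever conditions on \emph{exclusion} events $\mathbf{S}\subseteq U$, which are implemented exactly by setting the corresponding variables of $g_\mu$ to zero — no tilting, no limit argument. The price is that one can no longer steer toward a $\geq 1/2$ branch; instead the paper proves (Lemma~\ref{lemma:largeU}) that as long as the remaining set $U$ does not already carry a $1-\tfrac1m$ fraction of its own conditional mass, some element has exclusion probability at least $1/m^2$, which is still large enough to estimate multiplicatively from $\poly(m)$ samples; and when no such element exists the residual sum collapses to $\mu(U)$ up to a $1+O(1/m)$ factor and is read off directly. So the two proofs trade the same difficulty in different ways: yours needs a careful (exponential-magnitude, polynomial-bit-length) tilt to realize inclusion-conditioning, the paper's needs the $1/m^2$ lower bound and an explicit termination rule. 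Either can be made rigorous, but your proposal is not yet complete on the point where the two diverge.
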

Let us first state the algorithm which we use to solve the counting problem. Later in a sequence of lemmas we explain how to implement it in polynomial time and reason about its correctness. In the description, $\mathbf{S}$ denotes a random variable distributed according to $\mu_\cC$.

\begin{enumerate}[noitemsep,nolistsep]
\item Initialize $U \defeq [m]$, $X \defeq 1$.
\item Repeat
\begin{enumerate}[noitemsep,nolistsep]
\item Estimate the probability  $\Pb(\mathbf{S}=U : \mathbf{S} \subseteq U)$, if it is larger than $(1-\frac{1}{m})$, terminate the loop.
\item  Find an element $e\in U$ so that $\Pb(e\notin \mathbf{S} : \mathbf{S}\subseteq U)\geq \frac{1}{m^2}$.
\item Approximate $p_e \defeq \Pb(e\notin \mathbf{S} : \mathbf{S}\subseteq U)$ up to a multiplicative factor $\frac{\eps}{m}$.
\item Update $X \defeq X\cdot \rho_e$, where $\rho_e$ is the estimate for $p_e$.
\item Remove $e$ from $U$, i.e., set $U \defeq U\setminus \{e\}$.
\end{enumerate}
\item Return $X\cdot \mu(U)$.
\end{enumerate}

\begin{lemma}\label{lemma_additive}
Given $U\subseteq [m]$ and $e\in U$, assuming access to a $\cC$-sampling oracle, we can approximate the quantity
$$p_e=\Pb(e\notin \mathbf{S} : \mathbf{S}\subseteq U)$$
where $\mathbf{S}$ is distributed according to $\mu_\cC$, up to an additive error $\delta>0$ in time $\frac{\poly(m)}{\delta^2}.$ The probability of failure can be made $\frac{1}{m^c}$ for any $c>0$.
\end{lemma}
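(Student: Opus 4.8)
The plan is to estimate $p_e = \Pb(e\notin \mathbf{S} : \mathbf{S}\subseteq U)$ by a standard Monte Carlo procedure: draw many samples and count the fraction that avoid $e$. The only subtlety is that the $\cC$-sampling oracle gives us samples from $\mu_\cC$ over \emph{all} of $\cC$, not from the conditional distribution $\mu_\cC(\cdot \mid \mathbf{S}\subseteq U)$. So first I would explain how to simulate a sample from the conditioned distribution. Define $\mu'(S) = \mu(S)$ if $S\subseteq U$ and $\mu'(S) = 0$ otherwise; concretely, $g_{\mu'}$ is obtained from $g_\mu$ by substituting $x_i = 0$ for every $i\in [m]\setminus U$ (and leaving $x_i$ for $i\in U$), which we can evaluate using the oracle for $g_\mu$. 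Then $\mu'_\cC$ is exactly $\mu_\cC$ conditioned on $\mathbf{S}\subseteq U$, so feeding $\mu'$ to the $\cC$-sampling oracle yields a sample $\mathbf{S}'$ with the law we want. (One should note the conditional distribution is well-defined: in the algorithm of Lemma~\ref{lem:sample_count}, $e$ is only chosen when $\Pb(e\notin\mathbf{S}:\mathbf{S}\subseteq U)\geq 1/m^2$, so in particular there is positive mass on sets contained in $U$.)

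Next I would set up the Monte Carlo estimate. Draw $t$ independent samples $\mathbf{S}'_1,\ldots,\mathbf{S}'_t$ from $\mu'_\cC$ and let $\widehat{p}_e = \frac{1}{t}\sum_{j=1}^t \mathbbm{1}[e\notin \mathbf{S}'_j]$. Each indicator is a Bernoulli$(p_e)$ random variable, so $\av[\widehat{p}_e] = p_e$ and by a Hoeffding (or Chernoff) bound, $\Pb(|\widehat{p}_e - p_e| > \delta) \leq 2\exp(-2t\delta^2)$. Choosing $t = \Theta(\delta^{-2}\log m)$ makes the failure probability at most $m^{-c}$ for any prescribed constant $c$; to push it below an arbitrary inverse polynomial one simply increases the constant in front of $\log m$, still keeping $t = \poly(m)/\delta^2$. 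Each sample costs one call to the $\cC$-sampling oracle plus $\poly(m)$ overhead to build $\mu'$ (i.e., to describe the restricted generating polynomial), so the total running time is $\poly(m)/\delta^2$, as claimed.

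I do not anticipate a genuine obstacle here — this is the routine "sampling gives approximate counting of a marginal" argument. The one point that deserves care, and is really the crux, is the reduction to the conditional distribution: verifying that zeroing out the variables outside $U$ in $g_\mu$ produces precisely the generating polynomial of the restriction $\mu'$, and that $\mu'_\cC$ coincides with $\mu_\cC$ conditioned on $\{\mathbf{S}\subseteq U\}$ — this is immediate from the definition $g_{\mu'}(x) = \sum_{S\subseteq U}\mu(S)x^S$ once the substitution is made, and it is the same trick used in Lemma~\ref{lemma:conditioning}. With that in hand, the concentration bound finishes the proof.
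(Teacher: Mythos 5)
Your proposal is correct and matches the paper's proof essentially step for step: both simulate the conditional distribution by zeroing out the variables outside $U$ in $g_\mu$ to obtain $g_{\mu'}$, feed $\mu'$ to the $\cC$-sampling oracle, and estimate $p_e$ by the empirical fraction of samples avoiding $e$. The only (immaterial) difference is that you invoke a Hoeffding bound where the paper uses Chebyshev's inequality; both yield the claimed $\poly(m)/\delta^2$ sample complexity and inverse-polynomial failure probability.
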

\begin{proof} We sample a set $S\in \cC$ from the distribution $\Pb(S)\propto \mu(S)$ conditioned on $S\subseteq U$. This can be done using the sampling oracle, however instead of sampling with respect to $\mu$ one has to sample with respect to a modified function $\mu'$ which is defined as $\mu'(S)=\mu(S)$ for $S\subseteq U$ and $\mu'(S)=0$ otherwise. Note that the generating polynomial for $\mu'$ can be easily obtained from $g_\mu$ by just plugging in zeros at positions outside of $U$. Given a sample $S$ from $\mu'$ we define
$$X=\begin{cases}
1\qquad \mbox{ if } e\notin S,\\
0 \qquad \mbox{ otherwise. }
\end{cases}$$
Repeat the above independently $N$ times, to obtain $X_1, X_2, \ldots, X_N$ and finally compute the estimator:
$$Z=\frac{X_1+X_2+\cdots+ X_N}{N}.$$
By Chebyshev's inequality, we have:
$$\Pb(|Z-p_e|\geq \delta) \leq \frac{1}{N\delta^2}$$
Thus, by taking $N=\frac{\poly(m)}{\delta^2}$ samples, with probability $\geq 1 - \frac{1}{\poly(m)}$ we can obtain an additive error of at most $\delta$.
  \end{proof}

\begin{lemma}\label{lemma:largeU}
If $U\subseteq [m]$ is such that $\Pb(\mathbf{S}=U : \mathbf{S}\subseteq U)\leq (1-\frac{1}{m})$ then there exists an element $e\in U$ such that $\Pb(e\notin \mathbf{S} : \mathbf{S} \subseteq U)\geq \frac{1}{m^2}$, where $\mathbf{S}$ is distributed according to $\mu_\cC$.
\end{lemma}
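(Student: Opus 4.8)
The plan is to use a simple union bound together with an averaging argument. First I would observe that, conditioned on the event $\mathbf{S} \subseteq U$, the complementary event $\mathbf{S} \neq U$ coincides with $\bigcup_{e \in U}\{e \notin \mathbf{S}\}$: if $\mathbf{S} \subseteq U$ and $\mathbf{S} \neq U$, then $\mathbf{S}$ must miss at least one element of $U$. (Here one implicitly uses that $\Pb(\mathbf{S} \subseteq U) > 0$, so that the conditional distribution is well defined; in the context of the counting algorithm of Section~\ref{sec:samplingtocounting} this is guaranteed by the invariant that $U$ always contains some member of $\cC$, i.e.\ $\sum_{S \in \cC,\, S \subseteq U} \mu(S) > 0$.)

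Next I would apply the union bound to this decomposition, obtaining
$$\Pb(\mathbf{S} \neq U : \mathbf{S} \subseteq U) \le \sum_{e \in U} \Pb(e \notin \mathbf{S} : \mathbf{S} \subseteq U).$$
By the hypothesis of the lemma, $\Pb(\mathbf{S} = U : \mathbf{S} \subseteq U) \le 1 - \tfrac{1}{m}$, so the left-hand side is at least $\tfrac{1}{m}$. Hence the sum of the $|U|$ nonnegative terms on the right is at least $\tfrac{1}{m}$, and by averaging there is some $e \in U$ with $\Pb(e \notin \mathbf{S} : \mathbf{S} \subseteq U) \ge \tfrac{1}{m|U|}$. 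Since $|U| \le m$, this is at least $\tfrac{1}{m^2}$, which is exactly the asserted bound.

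I expect no genuine obstacle in this argument; the only point needing a little care is the well-definedness of the conditional probabilities, which I would dispatch by recalling the invariant that the set $U$ maintained by the algorithm never shrinks below containing a feasible set, so all the conditionings above make sense.
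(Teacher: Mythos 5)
Your proof is correct and is essentially the paper's argument in different clothing: the paper bounds $\sum_{e\in U}\Pb(e\in\mathbf{S}:\mathbf{S}\subseteq U)=\mathbb{E}(|\mathbf{T}|)\leq |U|-\frac{1}{m}$ and then averages, which is exactly your union-bound step read as a first-moment computation of the expected number of elements of $U$ missed by the sample. The final averaging over $|U|\leq m$ terms is identical in both.
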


\begin{proof}
Let  $\mathbf{T}$ be the random variable $\mathbf{S}$ conditioned on $\mathbf{S} \subseteq U$. Denote $q_e=\Pb(e\in \mathbf{S} : \mathbf{S} \subseteq U)$, we obtain
$$\sum_{e\in U} q_e=\mathbb{E}(|\mathbf{T}|)\leq \inparen{1-\frac{1}{m}} |U| + \frac{1}{m}\inparen{|U|-1}=|U|-\frac{1}{m}.$$
The inequality in the above expression follows from the fact that the worst case upper bound would be achieved when the probability of $|\mathbf{T}| = |U|$ is \textit{exactly} $1-\frac{1}{m}$ and with the remaining probability, $|\mathbf{T}| = |U|-1$. Hence $\sum_{e\in U} (1-q_e)\geq \frac{1}{m}$, which implies that $(1-q_e)\geq \frac{1}{m^2}$ for some $e\in U$.
  \end{proof}

We are now ready to prove Lemma~\ref{lem:sample_count}.

\begin{proof} (of Lemma \ref{lem:sample_count})
We have to show that the algorithm given above can be implemented in polynomial time and it gives a correct answer.

Step 2(a) can be easily implemented by taking $\poly(m)$ samples conditioned on $\mathbf{S} \subseteq U$ (as in the proof of Lemma~\ref{lemma_additive}). This gives us an approximation of $q_U=\Pb(\mathbf{S}=U : \mathbf{S}\subseteq U)$ up to an additive error of at most $m^{-2}$ with high probability. If the estimate is less than $(1-\frac{1}{2m})$ then with high probability $q_U \leq (1-\frac{1}{m})$ otherwise, with high probability we have
\begin{equation}\label{termination}
\mu(U)\leq \sum_{S\in \cC, S\subseteq U} \mu(S) \leq \inparen{1+\frac{4}{m}}\mu(U)
\end{equation}
and the algorithm terminates.

When performing step 2(b) we have a high probability guarantee for the assumption of Lemma~\ref{lemma:largeU} to be satisfied. Hence, we can assume that (by using Lemma~\ref{lemma:largeU} and Lemma~\ref{lemma_additive}) we can find an element $e\in U$ with $p_e=\Pb(e\notin \mathbf{S} : \mathbf{S}\subseteq U)\geq \frac{1}{2m^2}$. Again using Lemma~\ref{lemma_additive} we can perform step 2(c) and obtain a multiplicative $(1+\frac{\eps}{m})$-approximation $\rho_e$ to $p_e$.

Denote the set $U$ at which the algorithm terminated by $U'$ and the elements chosen at various stages of the algorithm by $e_1, e_2, ..., e_l$ with $l=m-|U'|$. The output of the algorithm is:
$$X \defeq  \rho_{e_1}\rho_{e_2}\cdot \cdots \cdot p_{e_l} \mu(U').$$
While the exact value of the sum is
$$Z \defeq p_{e_1} p_{e_2} \cdot \cdots \cdot p_{e_l} \cdot \sum_{S\in \cC, S\subseteq U'} \mu(S).$$
Recall that for every $i=1,2,\ldots, l$ with high probability it holds that:
$$\inparen{1+\frac{\eps}{m}}^{-1}\leq \frac{p_{e_i}}{\rho_{e_i}}\leq \inparen{1+\frac{\eps}{m}}.$$
This, together with \eqref{termination} implies that with high probability:
$$\inparen{1+\frac{\eps}{m}}^{-l} \leq \frac{X}{Z} \leq \inparen{1+\frac{\eps}{m}}^{l} \cdot \inparen{1+\frac{4}{m}},$$
which finally gives $(1+2\eps)^{-1} \leq \frac{X}{Z} \leq (1+2\eps)$ with high probability, as claimed. Note that the algorithm requires $\poly(m, \frac{1}{\eps})$ samples from the oracle in total.
  \end{proof}
  
\bibliographystyle{alpha}
\bibliography{references}

\end{document}